\numberwithin{equation}{section}
\newcommand{\slfrac}[2]{\left.#1\middle/#2\right.}
\numberwithin{table}{section}
\title{Hardness of decoding quantum stabilizer codes}
\author{Pavithran Iyer and David Poulin\footnote{\href{mailto:pavithran.iyer.sridharan@usherbrooke.ca}{pavithran.iyer.sridharan@usherbrooke.ca} , \href{mailto:David.Poulin@usherbrooke.ca}{David.Poulin@usherbrooke.ca} \newline D\'epartement de Physique, Universit\'{e} de Sherbrooke, Qu\'{e}bec, Canada - J1K2R1}}
\numberwithin{equation}{section}
\DeclareMathOperator{\wt}{\textsf{wt}}
\DeclareMathOperator{\wtt}{\textsf{wt}_2}
\DeclareMathOperator{\poly}{\textsf{poly}}
\DeclareMathOperator{\WE}{\textsf{WE}}
\DeclareMathOperator{\QMLD}{\textsf{QMLD}}
\DeclareMathOperator{\DQMLD}{\textsf{DQMLD}}
\DeclareMathOperator{\Prob}{\textsf{Prob}}
\DeclareMathOperator{\argmax}{\textsf{ArgMax}}
\DeclareMathOperator{\PH}{\textsf{PH}}
\DeclareMathOperator{\sharpP}{\textsf{\#P}}
\renewcommand{\P}{\textsf{P}}
\DeclareMathOperator{\NP}{\textsf{NP}}
\DeclareMathOperator{\NPHard}{\textsf{NP-Hard}}
\DeclareMathOperator{\NPComplete}{\textsf{NP-Complete}}
\DeclareMathOperator{\sharpPHard}{\textsf{\#P-Hard}}
\DeclareMathOperator{\sharpPComplete}{\textsf{\#P-Complete}}
\DeclareMathOperator{\FP}{\textsf{FP}}
\def\ket#1{| #1 \rangle}
\def\cC{\mathcal{C}}
\def\cG{\mathcal{G}}
\def\cH{\mathcal{H}}
\def\cJ{\mathcal{J}}
\def\cL{\mathcal{L}}
\def\cN{\mathcal{N}}
\def\cO{\mathcal{O}}
\def\cQ{\mathcal{Q}}
\def\cS{\mathcal{S}}
\def\cT{\mathcal{T}}
\def\cZ{\mathcal{Z}}
\def\sfD{\textsf{D}}
\def\sfM{\textsf{M}}
\def\sfT{\textsf{T}}
\def\sfV{\textsf{V}}
\newtheorem{definition}{Definition}[section]
\newtheorem{theorem}{Theorem}[section]
\newtheorem{lemma}{Lemma}[section]
\newenvironment{proof}[1][Proof\thinspace:]{\begin{trivlist}
\item[\hskip \labelsep {\bfseries #1}]}{\hfill \large{$\Box$}\end{trivlist}}
\newenvironment{proofMainThm}[1][\textbf{Proof of Thm.~({}\ref {main-theorem}{})}\thinspace:]{\begin{trivlist}
\item[\hskip \labelsep {\bfseries #1}]}{\hfill \large{$\Box$}\end{trivlist}}
\begin{document}
\maketitle
\begin{abstract}
In this article we address the computational hardness of optimally decoding a quantum stabilizer code. Much like classical linear codes, errors are detected by measuring certain check operators which yield an error syndrome, and the decoding problem consists of determining the most likely recovery given the syndrome.  The corresponding classical problem is known to be $\NPComplete$ \cite{mc-ber-78,V97a}, and a similar decoding problem for quantum codes is also known to be $\NPComplete$ \cite{heish-gall-11,kuo-lu-12}. However, this decoding strategy is not optimal in the quantum setting as it does not take into account error degeneracy, which causes distinct errors to have the same effect on the code. Here, we show that optimal decoding of stabilizer codes is computationally much harder than optimal decoding of classical linear codes, it is $\sharpPComplete$.

\vspace{0.3cm}

\noindent \textsf{\textbf{Keywords:} Stabilizer codes, Degenerate errors, Maximum likelihood decoding, Counting, $\sharpPComplete$.}
\end{abstract}

\section{Introduction}

In his seminal papers that gave birth to the field of information theory, Shannon showed that the capacity of a channel could be achieved using codes whose codewords are random bit strings \cite{Sha48a}. Despite this optimality, random codes have no practical use because we do not know how to decode them efficiently, i.e. in a time polynomial in the number of encoded bits. In 1978, Berlekamp, McEliece, and van Tilborg \cite{mc-ber-78} (see also \cite{V97a}) showed that decoding a classical linear code is an $\NPComplete$ problem, which strongly indicates that no efficient algorithm will ever be found to decode generic classical codes. A central problem in coding theory therefore consists in designing codes that retain the essential features of random codes, but yet have enough structure to be efficiently (approximately) decoded.

Quantum information science poses additional challenges to coding theory. While the stabilizer formalism establishes many key parallels between classical and quantum coding \cite{got-phd-97,CRSS97a}, important distinctions remain. On the one hand, quantum code design is obstructed by the additional burden that check operators must mutually commute. For that reason, it has proven difficult to quantize some of the best families of classical codes, such as low density parity check (LDPC) codes \cite{FM01a,MMM04a,COT07a,TZ09a} and turbo codes \cite{poulin-tillich-09}. On the other hand, quantum codes can be {\em degenerate}, which means that distinct errors can have the same effect on all codewords. This changes the nature of the decoding problem \cite{WHP03a,poulin-06,dav-gui-10,dav-emi-12}, and our goal here is to explore how degeneracy impacts the computational complexity of the decoding problem.

The conceptually simplest method to decode a stabilizer code is to ignore error degeneracy and to proceed as with classical linear codes: amongst all errors consistent with the error syndrome, find the one with the highest probability. We call this decoding method {\em Quantum Maximum Likelihood Decoding} ($\QMLD$). It was shown in \cite{heish-gall-11,yueh-13,kuo-lu-12} that $\QMLD$ is $\NPComplete$.

In the presence of degeneracy however, errors naturally fall into equivalence classes, with all errors in the same class having the same effect on all codewords.  The optimal decoding method searches over all equivalence classes of errors that are consistent with the error syndrome, the one with the largest probability. The probability of a class of error is simply the sum of the probabilities of all errors it contains. We call this decoding method {\em Degenerate Quantum Maximum Likelihood Decoding} ($\DQMLD$). Our main result is the following theorem.

\medskip
\noindent{\bf Thm.~({}\ref {main-theorem}{})} (Informally) 
$\DQMLD \in \sharpPComplete$ {\it up to polynomial-time Turing reduction.}
\medskip

We need Turing reduction since decoding is not a counting problem, while problems in $\sharpP$ consist in counting the number of solutions to a decision problem in $\NP$. Our result can be understood intuitively from the fact  that in order to compute the probability associated to an equivalence class, $\DQMLD$ must determine how many errors of a given weight belong to an equivalence class, hence the need to count. Our proof uses a reduction from the problem of evaluating the \emph{weight enumerator} of a classical (binary) linear code, which was shown by Vyalyi to be $\sharpPComplete$ \cite{vyali-03,B11d}. 

The rest of this paper is organized as follows. For self-containment, the next two sections provide elementary introductions to computational complexity and to the stabilizer formalism. Sec.~({}\ref {decoding-problem}{}) formally defines the decoding problem with a particular emphasis on the role of degeneracy. This section also contains an informal discussion on the importance of error degeneracy and how it impacts the decoding problem in specific settings. Sec.~({}\ref {sec-hardness}{}) presents the main result, which is proved in Sec.~({}\ref {sec-red-over}{}); the expert reader can jump directly to these two sections. The conclusion proposes possible extensions of the present work.

\section{Computational Complexity} \label{complexity-intro}
Two key resources required to solve any problem in computer science are space and time. One way of classifying problems is based on the runtime of a corresponding algorithm for that problem. This time is expected to depend on the size of the input, $n$, to the problem. Instead of precisely denoting the runtime as a function $f(n)$, which would depend on the underlying hardware of the solver, it is largely classified by its \emph{limiting behavior} as: $\mathcal{O}(\log_{2}n), \mathcal{O}(n^{k}), \mathcal{O}(2^{n})$ and so on. Consequently, the class of problems for which there exists an algorithm which runs in time $\mathcal{O}(n^{k})$, on an input of size $n$, $k$ being a constant independent of $n$, is called $\P$. Note that any problem in $\P$ is a decision problem, i.e, one for which the solution space is binary. If the problem is formulated to produce an output string, then the existence of a polynomial time algorithm classifies this problem in the class $\FP$. There are problems to which any witness or certificate for a solution can be verified in polynomial time. These fall into the class called $\NP$ and clearly, $\P \in \NP$.

A problem $P_{1}$ is \emph{at least as hard as} $P_{2}$ if we could use the solver for $P_{1}$ to solve $P_{2}$. This is formalized by the notion of a \emph{reduction}, which enables the classification of problems that are harder than a particular class, thereby introducing $\NPHard$ and $\NPComplete$, the latter being a class of hardest problems in $\NP$.

Besides decision problems, a category of problems involve enumerating the elements of a set. When this set is in $\NP$, the corresponding enumeration problem is classified as $\sharpP$ \cite{V79a,AB09}.
\begin{definition}[Counting class]{$\sharpP$:} \label{sharpP-defn}
A counting function $f:\{0,1\}^{\ast}\to\mathbb{N}$ is in $\sharpP$ if $\exists$ a problem $P_{1}\in\NP$ such that $f(x) = \left|\{y\in\{0,1\}^{\poly(|x|)} \thickspace : \thickspace P_{1}(x,y) = 1\}\right|$, $\forall\thinspace x\in\{0,1\}^{\ast}$.
\end{definition}
Hence, a function in $\sharpP$ counts the number of solutions to a corresponding decision problem in $\NP$. One can also compare the relative hardness of two counting problems as is done for the $\NP$ case, using the notion of counting reductions.
\begin{definition}[Counting reduction]
Let $f,g$ be counting functions. The function $f$ is Turing reducible to $g$, if $\forall\thinspace x \in\{0,1\}^{\ast}$, $f(x) \in \FP^{\thinspace g}$.
\end{definition}
That is, $f(x)$ can be computed in polynomial-time with polynomially many queries to an oracle for $g(x)$, for any $x$. Subsequently, analogous to $\NPComplete$, a notion of reduction for counting functions defines the hardest problems in $\sharpP$, as the class $\sharpPComplete$ \cite{AB09}.
\begin{definition}[Counting complete]{$\sharpPComplete$}:
A counting function $f$ is in $\sharpPComplete$ iff $f$ is in $\sharpP$ and $g \thinspace \in \thinspace \FP^{\thinspace f}$, $\forall\thinspace g\thinspace\in\thinspace \sharpP$.
\end{definition}
The last criterion can also be replaced with: $g\thinspace \in\thinspace \FP^{f}$, for some $g\thinspace \in\thinspace \sharpPComplete$. We will prove $\sharpPComplete$-ness of the problem of our concern, by adhering to such a recursive definition.

The level of hardness of a $\sharpPComplete$ problem can be appreciated by a consequence of a result shown by Toda \cite{T91a,AB09}, stating that a polynomial time algorithm that is allowed a {single access} to a $\sharpP$ oracle, can solve any problem in $\PH$, i.e, $\P^{\sharpP} = \NP \cup \NP^{\NP} \cup \NP^{\NP^{\NP}} \dots $.

A particular example of a counting function that will be of interest in the coming sections is the \emph{weight enumerator} function for a linear code.
\begin{definition}[Weight enumerator problem] {$\WE$:} \label{WE-def-code}
Given a $(n,k)$ linear code $\cC$ and a positive integer $i$, compute $\WE_{i}(\cC) = |\{c\in\cC : |c| = i\}|$, the number of codewords of Hamming-weight $|c| = i$.
\end{definition}
The corresponding decision problem, which is to determine if there exists a codeword in $\mathcal{C}$ of weight $i$, is known to be in $\NPComplete$ \cite{mc-ber-78,barg97}. This immediately implies that $\WE_{i}(\cC) \in \sharpP$, and furthermore, it is known to be complete for this class \cite{vyali-03,B11d}.

\begin{theorem}{Hardness of $\WE$:} \label{WE-sharpP-complete}
For a linear code $\mathcal{C}$ and $i= \Omega(\textsf{polylog}(n))$, $\WE_{i}(\mathcal{C}) \in \sharpPComplete$.
\end{theorem}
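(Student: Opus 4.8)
Membership in $\sharpP$ is the easy direction and was essentially already noted: taking $x=(\cC,i)$ and letting the witnesses $y\in\{0,1\}^{n}$ range over codewords of $\cC$, the predicate ``$y\in\cC$ and $|y|=i$'' is decidable in linear time, so $\WE_{i}$ matches Definition~\ref{sharpP-defn}. The content of the theorem is $\sharpP$-hardness, and the plan is to give a polynomial-time counting reduction from a standard $\sharpPComplete$ problem into $\WE$. Vyalyi's argument \cite{vyali-03,B11d} can be cited verbatim; the self-contained reduction below also makes transparent why the probed weight lands in the regime $i=\Omega(\textsf{polylog}(n))$.

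The plan is to reduce from counting perfect matchings of a bipartite graph (equivalently the $0/1$ permanent), which is $\sharpPComplete$ by Valiant's theorem \cite{V79a}. Given bipartite $G=(A\cup B,E)$ with $|A|=|B|=m$, first decide in polynomial time whether $G$ has a perfect matching; if not, output $0$. Otherwise fix one perfect matching $M_{0}$ and work in $\mathbb{F}_{2}^{E}$. Since $M_{0}$ has degree $1$ at every vertex, an edge subset $x$ has all degrees odd iff $x+M_{0}$ has all degrees even, i.e. iff $x\in M_{0}+Z(G)$, where $Z(G)\subseteq\mathbb{F}_{2}^{E}$ is the cycle space (the kernel of the vertex--edge incidence matrix). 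Moreover, if $|x|=m$ and every vertex of $x$ has odd (hence $\ge 1$) degree, then $\sum_{v}\deg_{x}(v)=2m$ forces every degree to equal $1$, so the odd-degree subgraphs of weight exactly $m$ are \emph{precisely} the perfect matchings of $G$. Homogenise the coset $M_{0}+Z(G)$ into an honest linear code $\hat\cC\subseteq\mathbb{F}_{2}^{|E|+1}$, spanned by the vectors $(z,0)$ for $z$ ranging over a generating set of $Z(G)$ (read off from the fundamental cycles of a spanning forest, hence polynomial time) together with the single extra generator $(M_{0},1)$. Splitting $\hat\cC$ by its last coordinate gives
\[
\WE_{m+1}(\hat\cC)\;=\;\WE_{m+1}\!\big(Z(G)\big)\;+\;\big|\{x\in M_{0}+Z(G):|x|=m\}\big|,
\]
whose second term is exactly the number of perfect matchings of $G$. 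To kill the first term I would use the one genuinely graph-theoretic input: in a bipartite graph $|x|=\sum_{v\in A}\deg_{x}(v)$, so every $x\in Z(G)$ (all degrees even) has even weight; choosing the parity of $m$ so that $m+1$ is odd --- padding $G$ with an extra matched pair $\{a^{\ast},b^{\ast}\}$ and probing weight $m+2$ when $m$ is odd --- makes $\WE_{m+1}(Z(G))=0$. A single oracle query therefore returns the permanent, establishing $\sharpP$-hardness. The weight actually probed is $i=m+O(1)$ while the code length is $n=\Theta(|E|)$ with $|E|\le m^{2}$, so $i=\Omega(\sqrt n)$ and in particular $i=\Omega(\textsf{polylog}(n))$: the hard instances lie inside the stated regime. (The restriction cannot be removed entirely, since for $i$ small enough that $\binom{n}{i}$ is polynomial, $\WE_{i}(\cC)$ is computed by brute-force enumeration and lies in $\FP$.)

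The step I expect to be the main obstacle is isolating the matching count from the spurious term $\WE_{m+1}(Z(G))$: over a general (non-bipartite) graph the cycle space contains odd-weight vectors, the displayed identity then superposes the matching count with an unrelated and possibly $\sharpP$-hard quantity, and no parity choice of the probed weight separates them --- so the reduction genuinely hinges on $G$ being bipartite together with the correct parity of $m$. Everything else (linearity of the homogenised code, polynomial-time construction of a generator matrix for $\hat\cC$, and the $\sharpP$-membership direction) is routine bookkeeping.
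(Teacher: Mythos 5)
Your plan is correct, but note that the paper itself does not prove this theorem: it observes that membership in $\sharpP$ follows from the $\NPComplete$-ness of the associated decision problem and then imports completeness wholesale from Vyalyi \cite{vyali-03} (see also \cite{B11d}). Your self-contained route through the bipartite permanent is sound at every step I checked: $M_{0}+Z(G)$ is exactly the set of edge subsets with all vertex degrees odd, the degree count $\sum_v \deg_x(v)=2|x|=2m$ over $2m$ vertices forces the weight-$m$ members of that coset to be perfect matchings, homogenising with one extra coordinate turns the coset into a genuine linear code whose weight-$(m+1)$ count splits as you display, and bipartiteness (every cycle-space vector has even weight) plus the parity padding annihilates the $\WE_{m+1}(Z(G))$ term, so a single oracle call returns the permanent; with the easy $\sharpP$-membership this gives completeness under the paper's Turing-reduction convention, citing Valiant for the permanent. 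The substantive difference between your argument and the cited one is the weight regime in which hardness is produced. Your hard instances have $i=\Theta(\sqrt n)$, which satisfies the theorem's literal hypothesis $i=\Omega(\textsf{polylog}(n))$, so the statement as written is proved; however, the role this theorem plays later in the paper is more demanding: Lemma~\ref{lemma-gap-size} only recovers the coefficients $\{\WE_i(\cC)\}_{i\le\lambda}$ under a promise gap $\Delta\le 2[2+n^{\lambda}]^{-1}$, and Thm.~\ref{main-theorem} takes $\lambda=\Omega(\textsf{polylog}(n))$ to keep $\Delta$ at $1/\textsf{quasi-polynomial}(n)$, which implicitly requires the hardness of $\WE$ to reside already in the first $\textsf{polylog}(n)$ coefficients --- precisely what is attributed to \cite{vyali-03}. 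Your reduction does not reach that low-weight regime (and no reduction probing only weights $O(\textsf{polylog}(n))$ of a polynomial-length code easily could, since those coefficients are brute-force computable in quasi-polynomial time); plugged into the paper's machinery it would only yield hardness of $\DQMLD$ for $\lambda=\Theta(\sqrt n)$, i.e.\ a correspondingly smaller promise gap. So: a valid, more elementary and verifiable proof of the theorem as stated, but quantitatively weaker than the cited result in the specific low-weight form the paper later exploits.
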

Hence to show that a problem of computing $f$ is in $\sharpPComplete$, in addition to its membership in $\sharpP$, it suffices to show that for any $(n,k)$ linear code $\cC$, $\WE_{i}(\mathcal{C})$, $i\in o(\textsf{polylog}(n))$ can be computed in polynomial-time by allowing at most polynomially many queries to an oracle for computing $f$.

\section{Stabilizer codes} \label{sec-stab-codes}

In this section, we will provide the necessary background material on stabilizer codes, see \cite{got-phd-97,NC00} for more complete introductions. In the setting of quantum coding, information is encoded into $n-$qubit states in $\cH^{n}_{2} = (\mathbb C^2)^{\otimes n}$.  Errors are operators acting on this space, modeled as elements of the  \emph{Pauli group} ${\cG}_{n}$. 
\begin{definition}[Pauli group]  \label{def-pauli-group}
The Pauli group on $1$-qubit, denoted by $\mathcal{G}_{1}$ is a matrix group over $\mathbb{C}^{2}$, generated by the Pauli matrices; $\mathcal{G}_{1} = \{\pm X, \pm Y, \pm Z, \pm i X, \pm i Y, \pm i Z, \pm \mathbb{I}\} = \langle i, X, Y, Z\rangle$. The Pauli group on $n$-qubits is the group generated by the set of Pauli matrices acting on $n$-qubits, denoted by $\mathcal{G}_{n} = \mathcal{G}^{\otimes n}_{1} = \{\epsilon P_{1} \otimes \cdots \otimes P_{n} | \epsilon\in\{\pm 1, \pm i\} \thickspace, \thickspace\thickspace P_{i}\in \{\mathbb{I},X, Y, Z\}\}$.
\end{definition}
Though the definition of the Pauli group in Def.~({}\ref {def-pauli-group}{}) contains the scalars $\{\pm 1, \pm i\}$, they are often considered unimportant for error detection or correction as they do not affect the error syndrome nor the error correction. This enables us to define the effective Pauli group by identifying operators that are related by a multiplicative constant in $\{\pm 1, \pm i\}$, denoted by $\overline{\cG}_{n} = \cG_{n}/{\{\pm 1, \pm i\}}$. The number of qubits affected by an error $E\in\overline{\cG}_{n}$ is the number of non-identity components in the tensor product form of $E$, and is called the {\em weight} of the error, denoted by $\wt(E)$.

A quantum stabilizer code is defined by a set of check operators, which are also elements of the Pauli group.
\begin{definition}[Stabilizer codes] \label{def-stabilizer-codes}
A stabilizer code is a subspace of the $n$-qubit vector space $\cH_2^n$, described as the common +1 eigenspace of an Abelian and Hermitian subgroup of ${\cG}_{n}$, called the stabilizer subgroup. Hence, for a stabilizer code $\cQ$, the stabilizers are defined by:
\begin{gather}
\mathcal{S} = \{S \in \overline{\cG}^{\thinspace n} \thickspace : \thickspace S|\psi\rangle = |\psi\rangle \thickspace, \thickspace \forall \thinspace |\psi\rangle \in \cQ\}. \label{eq-stabilize}
\end{gather}
Equivalently, we can define the code $\cQ$ in terms of its stabilizers
\begin{equation}
\cQ = \{ \ket\psi \in\cH_{2}^{n}\thickspace :\thickspace S\ket\psi = \ket\psi \thickspace,\thickspace \forall S\in \cS\}
\label{code}.
\end{equation}
\end{definition}
Stabilizer codes are by far the most widely studied codes in the quantum setting. 

\subsection{Operator basis}

When $\cS$ is generated by $n-k$ independent \emph{stabilizer generators} $\{S_j\}_{j=1\ldots n-k}$, the subspace $\cQ$ has dimension $2^k$, so it encodes $k$ logical qubits, and we denote its parameters by $[[n,k]]$. All the elements in $\overline{\cG}_{n}$ that leave the code $\cQ$ globally invariant belong to the \emph{normalizer} of $\cS$ in $\overline{\cG}$, denoted by $\cN(\cS)$. The normalizer $\cN(\cS)$ forms a group generated by $n+k$ generators and $\cS \subset \cN(\cS)$.
However, not all operators in $\cN(\cS)$ necessarily fix every state in $\cQ$. Since the action of $\cS$ is trivial on all code states, we define a subset of operators in $\cN(\cS)$ called the \emph{logical operators} that represent the quotient space $\cL = \cN(\cS)/\cS$, each of which have a distinct action on individual states in $\cQ$. For a shorthand notation, we will identify these representative with elements of $\cL$. This group has $2\thinspace k$ \emph{canonical logical generators} $\{\overline X_j,\overline Z_j\}_{j=1\ldots k}$, such that all generators mutually commute except the pairs $\overline X_j$ and $\overline Z_j$ that anti-commute. The smallest weight of a nontrivial logical operation is the distance of the code, 
\begin{gather}
d =  \min_{E\in\cN(\cS)\backslash\cS}\wt(E).
\end{gather}

Note that the operators defined so far, $\{S_j\}_{j=1\ldots n-k}$ and $\{\overline X_j,\overline Z_j\}_{j=1\ldots k}$ do not generate $\overline{\cG}_{n}$. To complete the basis, we need to define the group of {\em pure errors} $\cT = \cN(\cL)/\cS$, which is also Abelian. We can always find a set of \emph{canonical pure error generators } $\{T_j\}_{j=1\ldots n-k}$ such that $T_j$ commutes with all other pure error generators, all logical generators, and all stabilizer generators except $S_j$ with which it anti-commutes. To summarize, we have the canonical basis of the Pauli group $\{S_i, T_i, \overline X_j, \overline Z_j\}_{i=1\ldots n-k, j=1\ldots k}$ with all commutation relations trivial, except 
\begin{equation}
T_iS_i = - S_iT_i \quad {\rm and}\quad \overline X_j\overline Z_j = -\overline Z_j\overline X_j .
\label{commutations}
\end{equation}
Any Pauli operator  $E \in \bar{\cG}_{n}$  can be expressed as a product of elements in these respective groups :
\begin{gather}
E = T\cdot L\cdot S \label{decomp-pauli-op} \qquad (\text{where $T\in\mathcal{T}$, $L\in\mathcal{L}$ and $S\in\mathcal{S}$}).
\end{gather}
Decomposition in this basis will be particularly useful to formulate the decoding problem. 

\subsection{Degenerate and non-degenerate errors} \label{sec-deg-err}
Two errors $E$ and $E'$ are called \emph{degenerate} if they have an identical effect on all code states, i.e. $E\ket\psi = E'\ket\psi\thinspace, \thinspace \forall \ket\psi\in\cQ$. Given the decomposition in Eq.~({}\ref {decomp-pauli-op}{}) and the definition of the code in Eq.~({}\ref {eq-stabilize}{}), we see that this is only possible if the two errors are related by an element of $\cS$, i.e.,  $E' = E\cdot S$ for some $S\in \cS$. This naturally leads to an equivalence relation between errors, with two errors belonging to the same equivalence class if they are related by an element of $\cS$. In other words, the set of equivalence classes is the quotient space $\overline{\cG}_{n}/\cS \sim \cL\cdot \cT$. We can thus label the equivalence classes by $L,T$ with $L\in \cL$ and $T\in \cT$. For a fixed $T$, the different classes labelled by $L\in\cL$ are referred to as \emph{logical classes} and each class is of size $2^{n-k}$. The class labelled by $L = \mathbb{I}$ and $T = \mathbb{I}$ is $\cS$ itself. Any other logical class can be expressed as a coset of $\cS$, but however they are not groups by themselves.

Note that  degeneracy is unique to the quantum error correction setting. When a bit flip pattern $e$ is applied to a classical bit string $x$, the resulting string $y = x+e$ always differs from the original one, except for the trivial error $e=0^n$. Consequently, each logical class contains only one element. Our main result in this paper shows that accounting for degenerate errors in decoding stabilizer codes greatly increases its computational complexity. 

\subsection{Symplectic representation} \label{sec-symplectic}
There is a one-to-one correspondence between \emph{classical symplectic linear codes} in $\mathbb{Z}^{2n}_{2}$ and stabilizer codes \cite{CRSS97a,got-phd-97}, which follows from a mapping $\eta$ of pauli operators in $\overline{\cG}_{n}$ into binary strings of length $2n$. The mapping is performed by first expressing every $\sfM\in\overline{\cG}_{n}$ in the form:
\begin{gather}
\textsf{M} = \left(\textsf{A}_{1}\otimes \dots \otimes \textsf{A}_{n}\right)\cdot\left(\textsf{B}_{1}\otimes\dots\otimes \textsf{B}_{n}\right) \thickspace, \thickspace\thickspace \textsf{A}_{i}\in\{Z,I\} \thickspace, \thickspace\thickspace \textsf{B}_{j}\in\{X,I\}\thickspace, \thickspace \thickspace 1\leq i,j\leq n \thickspace \label{general-stabilizer-element-X-Z}
\end{gather}
and substituting $X$ and $Z$ by $'1'$ and $\mathbb{I}$ by $'0'$. For instance, $\eta(X\otimes Z\otimes I\otimes Y)=0 1 0 1 1 0 0 1$. This maps $\overline{\cG}^n$ into its \emph{symplectic representation} in $\mathbb{Z}^{2n}_{2}$. Moreover, any two mutually commuting operators  $P,Q \in \cG_n$ are mapped into binary strings $x,y \in \mathbb{Z}^{2n}_{2}$ that are orthogonal under the \emph{symplectic product}, defined by $(x,y) = x \Lambda y^{\text{T}}$ with $\Lambda = I_n\otimes X$. This immediately implies that $\eta(\cN(\cS))$ is a vector space (code) which is the kernel of $\eta(\cS)$ under the symplectic product. The parity check matrix for this classical code is $\eta(\{S_{i}\}_{i=1}^{n-k})$ and its generator matrix is $\eta(\{\overline{X}_{j},\overline{Z}_{j}\}_{j=1}^{k},\{S_{i}\}_{i=1}^{n-k})$. For a $2n$-bit string $b = (z|x)$ expressed as the concatenation of two $n$-bit stings, we denote the inverse mapping $\eta^{-1}(b) = Z^z \cdot X^x$ where we use the shorthand notation $Q^x = Q^{x_1}\otimes Q^{x_2}\otimes\cdots\otimes Q^{x_n}$. 

The mapping $\eta$ implicitly indicates that we consider a $Y-$type Pauli operation as a $Z-$type operation followed by a $X-$ type operation, in other words, as an error of weight two. Consequently, we define the {\em symplectic-weight} of the Pauli error as $\wt_{2}(E) = |\eta(E)|$ with $|\cdot|$ denoting the usual Hamming weight. Clearly, $2\wt(E)\geq \wt_{2}(E) \geq \wt(E)$.

This correspondence with classical linear codes is the key to most of the complexity results \cite{heish-gall-11,yueh-13,kuo-lu-12,vyali-03} involving stabilizer codes, including ours. In many cases it can be used to build parallels to  already known results for classical (linear) symplectic codes.

\subsection{Error model}
An error model assigns probabilities to various errors, which are then used by the decoder to statistically infer what recovery is most likely given the error syndrome. We restrict ourselves to errors from $\overline{\cG}_{n}$, so the corresponding error model is referred to as a \emph{Pauli channel}. There are multiple type of Pauli channels which are often used in studying error correcting codes \cite{preskill-ecc-98,NC00}. Here, we will  further assume that errors act independently on each qubit.

\begin{definition}[Memoryless Pauli Channel] \label{def-pauli-ch}
On a memoryless Pauli channel, the probability of error $E = \bigotimes_{i=1}^{n} E_{i} \in \bar \cG_n$, with $E_i \in \bar\cG_1$, is $\Prob(E) = \prod_{i=1}^{n}q_{i, E_{i}}$ where the $q_{i} $ are properly normalized probability vectors on $\{I,X,Y, Z\}$.
\end{definition}
In general, the probabilities $q_{i, E_{i}} $ can be different for all qubits. One important feature of memoryless Pauli channels is that they can be efficiently specified (to finite accuracy), i.e. using $\cO(n)$ bits of information. An obvious simplification of the above channel is made by supposing that the noise rates are the same for all $X,Y,Z$ type errors on each qubit. This specifies to a \emph{depolarizing channel}, and can be expressed as
\begin{gather}
\Prob(E) = \left(\dfrac{p}{3}\right)^{\wt(E)}\times (1-p)^{n - \wt(E)} . \label{eq-depol-err}
\end{gather}
Alternatively, one can assume that each qubit is first affected by  $X$-type errors,  and a then by $Z$-type errors, and moreover, these errors are independent of each other. A $Y$-type error occurs only when both a $Z-$type as well as an $X-$type error affect a qubit. This specifies a \emph{independent $X-Z$ channel}, which we will use to prove our main result.
\begin{definition}[Independent X-Z channel] \label{def-XZ-ch}
An independent $X-Z$ channel is a memoryless Pauli channel defined by $q_X = q_Z = \slfrac{p}{2}(1-\slfrac{p}{2})$, $q_Y = \slfrac{p^{2}}{4}$, and $q_I = (1-\slfrac{p}{2})^{2}$ for each qubit. The probability of an error $E\in\bar{\cG}_{n}$ can thus be written as 
\begin{gather}
\Prob(E) = \left(\dfrac{p}{2}\right)^{\wtt(E)}\times\left(1-\dfrac{p}{2}\right)^{2n - \wtt(E)} \label{eq-prob-XZ}
\end{gather}
where $\wtt(E)$ is the symplectic weight of $E$, see Sec.~({}\ref {sec-symplectic}{}).
\end{definition}
One key feature of both the depolarizing channel and the independent $X-Z$ channel is that the probability of an error depends only on its weight (either $\wt$ or $\wt_2$). As a consequence, evaluating the probability of a logical class can be done by counting the number of its elements of a given weight, which puts the problem in $\sharpP$. 
Notice also that $\Prob(E)$ is monotonically decreasing with its weight for $p\in[0,\slfrac{1}{2}]$, implying that in such a range, a minimum weight error has the maximum probability. We will often refer to $p$ to as the error-rate per qubit or the physical noise-rate.

%To show hardness however, we will need the freedom to assign different probabilities $q_{i,E}$ on some qubits. Though this does not immediately fit the definition of the independent $X-Z$ channel, and hence potentially takes the problem out of $\sharpP$, it turns out that this $n-$qubit noise model can be simulated by a $n+m$-qubit noise model on a specific concatenated code, under the independent $X-Z$ channel, see App.~({}\ref {app-concatenate}{}).

%%%%%%%%%%%%%%%%%%%%%%%%%%%%%%%%%%%%%%%%%%%%%%%%%%%%%%%%%%%%%%%%%%%%%%%%%%%%%%%%%%%%%%%%%%%%%%%%%%%
\section{The decoding problem} \label{decoding-problem}

In this section we define the decoding problem more formally, and explain how it is affected by the existence of degenerate errors as defined in Sec.~({}\ref {sec-deg-err}{}).

The qubits are prepared in a code state $\ket\psi \in \cQ$ and are subject to the memoryless Pauli channel (def. \ref{def-pauli-ch}). The received state is $\ket\phi = E \ket\psi$ where $E \in \overline{\cG}_n$ is an unknown error chosen from the distribution $\Prob(E) $. \emph{Decoding} refers to the operation performed by the receiver in recovering the state $|\psi\rangle$ from the state $|\phi\rangle = E|\psi\rangle$. Since all Pauli operators square to the identity, it suffices for the receiver to determine $E$ and apply it to the system to recover the original state, i.e. $E\ket\phi = E^{2}\ket\psi = \ket\psi$.

The error, being an element of the Pauli group, can either commute or anti-commute with each of the stabilizer generators. Thus, upon measurement of each stabilizer generator, we obtain an eigenvalue $+1(-1)$ indicating that $E$ commutes (anti-commutes) with the stabilizer generator:
\begin{equation}
S_j\ket\phi = S_{j}\cdot E\ket\psi = \left\{
\begin{array}{llll}
E\cdot S_{j}\ket\psi &= E\ket\psi &= +\ket\phi & {\rm if\ } E\cdot S_{j} = S_{j}\cdot E \\
-E\cdot S_{j}\ket\psi &= E\ket\psi &= -\ket\phi & {\rm if\ } E\cdot S_{j} = -S_{j}\cdot E .
\end{array}
\right.
\end{equation}
Each $\pm 1$ measurement outcome $m_{j}$ is encoded into a bit $s_{j}$ such that $m_{j} = (-1)^{s_{j}}$. The outcomes of measuring all the check operators is encoded as a $(n-k)$ bit vector $\vec{s}$ called the \emph{error syndrome}.

\subsection{Non-degenerate decoding}
The decoding problem  consists in identifying $E$ conditioned on knowledge of the error syndrome. As in the classical case \cite{ber-ecc-84,mc-ber-78,barg97}, the  conceptually simplest strategy is to choose $E$ that has the highest probability amongst all errors consistent with the measured syndrome. This is called \emph{Quantum Maximum Likelihood Decoding} ($\QMLD$) \cite{heish-gall-11,kuo-lu-12,yueh-13}, and can be formulated mathematically as:
\begin{equation}
E_{\QMLD}(\vec s) = \argmax_{E\in \bar\cG_n} \Prob(E\thickspace |\thickspace \vec s)
\end{equation}
Using the decomposition Eq.~({}\ref {decomp-pauli-op}{}), we can view the error probability $\Prob(E)$ as a joint probability over the group $\cT$, $\cL$, and $\cS$ in a natural way:
\begin{equation}
\Prob(T,L,S) = \Prob(E = T\cdot L\cdot S), \quad {\rm with \ } T\in \cT,\ L\in \cL, \rm{ \  and \ } S\in \cS.
\end{equation}
Using the commutation relations given at Eq.~({}\ref {commutations}{}), it follows that the knowledge of $\vec s$ is equivalent to knowledge of $T$, since $T_j$ is the only element of this basis that anti-commutes with $S_j$. Thus, we have
$T_{\vec s} = \prod_{j} T_{j}^{s_{j}}$, and all errors in $T_{\vec{s}}\cdot\cN(\cS)$ are consistent with the syndrome  $\vec{s}$. Hence, at this stage a \emph{best guess} for the elements in $\mathcal{S}$ and $\mathcal{L}$ needs to be employed.  This involves finding $L\cdot S$ that \emph{maximizes the likelihood} of $E = T_{\vec{s}}\cdot L\cdot S$, implying an equivalent definition of  $\QMLD$: 
\begin{equation}
E_{\QMLD}(\vec s) = T_{\vec s} \cdot \argmax_{L\in \cL, S\in \cS} \Prob(L,S | T_{\vec s})
\label{maxQMLD}
\end{equation}
where the conditional probability is given by Bayes' rule $\Prob(L,S|T_{\vec{s}}) = \Prob(L,S,T_{\vec{s}})/\Prob(T_{\vec{s}})$ with the marginal defined as usual $\Prob(T_{\vec{s}}) = \sum_{L,S} \Prob(L,S,T_{\vec{s}})$. 

Informally speaking, $\QMLD$ addresses the  problem of determining the element of $\mathcal{L}\cdot \mathcal{S}$, whose probability is maximum, given an error rate and a syndrome. For the special cases of the depolarizing channel Eq.~({}\ref {eq-depol-err}{}) and the independent $X-Z$ channel Def.~({}\ref {def-XZ-ch}{}), the search for an operator with maximum probability is synonymous to the search for an operator of minimum weight (with two possible notions of weight, one over the $\overline{\cG}_{n}$ and the other over $\mathbb{Z}_{2}^{2n}$). Consequently, $\QMLD$ is also known as a \emph{minimum-weight decoder}.

One subtlety arises in case where the maximum probability is a close tie, as we cannot expect a decoder to discriminate probabilities within an arbitrary accuracy. Thus, we can define $\QMLD$ as the problem of identifying the optimal couple $L, S$, but we tolerate that it fails when more than one choice have probabilities that are within a small distance $\Delta$ from the optimal. The standard way of formalizing this notion is with a \emph{promise gap}, where we just assume in the definition of the problem that there is no close tie.

\begin{definition}[Quantum Maximum Likelihood Decoding]{$\QMLD$:} \label{defn-QMLD}\\
{\bf Input:} An $n$-qubit quantum code with stabilizer group $\cS$ specified by $n-k$ independent generators, a memoryless Pauli channel $\Prob(E)$,  an error syndrome $\vec s \in \mathbb{Z}_{2}^{n-k}$, and a promise gap $\Delta$. \\
{\bf Promise:} There exists a couple $S^{\star}\in \cS$ and $L^{\star} \in \cN(\cS)/\cS$ such that
\begin{equation}
\Prob(L^{\star},S^{\star}|T_{\vec s}) - \Prob(L,S|T_{\vec s}) \geq \Delta \Prob(L^{\star},S^{\star}|T_{\vec s}) \thickspace,\thickspace  \forall\thickspace (L,S)\neq (L^{\star},S^{\star}).
\end{equation}
{\bf Output:} $L^{\star}S^\star$.
\end{definition}

As mentioned above, for the depolarizing channel and the independent $X-Z$ channel $\QMLD$ is formally equivalent to a minimum-weight decoder, in which case the promise gap is irrelevant and can be set to $0$. 

\subsection{Degenerate decoding} \label{sec-deg-dec}

We will now explain how degeneracy changes the decoding strategy. As explained in Sec.~({}\ref {sec-deg-err}{}), errors can be classified into equivalence classes labelled by $L,T$, with all errors within a class having the same effect on the code and therefore all being correctable by the same operation. As a consequence, we see that $\QMLD$ is a suboptimal decoding strategy---in the sense that it does not reach the maximum probability of correctly decoding---because it fails to recognize the equivalence between degenerate errors. Instead of searching the most likely error, the optimal decoder seeks for the most likely equivalence class of errors, with the probability of a class of errors equal to the sum of the probability of the errors it contains. Since all errors in an equivalence class are related by an element of $\cS$ and their $\cT$ component is fixed by the syndrome, we can write the probability of a class conditioned on syndrome $\vec{s}$ as 
\begin{equation}
\Prob(L|\vec s) = \sum_{S\in \cS} \Prob(L,S|T_{\vec s}),\label{eq:PLS}
\end{equation}
where we use standard Bayesian calculus as above. The \emph{Degenerate Maximum Likelihood Decoding} ($\DQMLD$) problem can be formulated mathematically as determining an error in the most probable logical class, for a given syndrome, Eq.~({}\ref {eq:PLS}{}).
\begin{definition}[Degenerate Maximum Likelihood Decoding]{$\DQMLD$} \label{def-DQMLD}\\
{\bf Input:} An $n$-qubit quantum code with stabilizer group $\cS$ specified by $n-k$ independent generators, a memoryless Pauli channel $\Prob(E)$,  an error syndrome $\vec s \in \mathbb{Z}_{2}^{n-k}$, and a promise gap $\Delta$. \\
{\bf Promise:} There exists an $L^{\star} \in \cN(\cS)/\cS$ such that
\begin{equation} \label{eq-DQMLD-gap}
\Prob(L^*|T_{\vec s}) - \Prob(L|T_{\vec s}) \geq \Delta \Prob(L^*|T_{\vec s}) \thickspace,\thickspace  \forall L\neq L^*.
\end{equation}
{\bf Output:} $L^*$.
\end{definition}

Note that the promise gap can be expressed as a relative gap or an additive gap, the former being the right one in our setting. This is because the relative promise gap can be related to the failure probability of the code under optimal decoding. Consider a large promise gap $\Delta = 1-4^{-k}\epsilon$. Rewriting the promise as $P(L|T_{\vec s}) \leq 4^{-k} \epsilon P(L^*|T_{\vec s})$ and summing over all $L\neq L^*$ (of which there are $4^k-1$), we arrive at $P(L^*|T_{\vec s}) \geq 1-2\epsilon$, which simply says that the probability that the error that occurred is not equivalent to $L^*$---and hence that the decoder fails---is at most $2\epsilon$.

Note also that for a fixed $T$, the probabilities $\Prob(L,S|T)$ and $\Prob(L,S,T)$ differ only by a constant, so we can perform the optimization in Def.~({}\ref {def-DQMLD}{}) on the joint probability instead of the conditional probability. The sum appearing in the this probability Eq.~({}\ref {eq:PLS}{}), being over $2^{n-k}$ terms, forbids a  polynomial-time direct computation of its value. However, for i.i.d. Pauli channels such as the   depolarizing channel and the independent $X-Z$ channel, by grouping terms of equal weight in the sum, we can express the sum in Eq.~({}\ref {eq-logical-prob}{}) more succinctly. For the case of the independent $X-Z$ channel Def.~({}\ref {def-XZ-ch}{}), the above joint probability can be expressed as
\begin{gather}
\Prob(T_{\vec{s}},L,S) = \left(1-\dfrac{p}{2}\right)^{2n}\sum_{S\in\mathcal{S}}\tilde{p}^{\wtt(T_{\vec{s}} \cdot L\cdot S)} \label{eq-logical-prob} 
\end{gather}
where $\tilde{p} = \slfrac{p}{(2-p)} $. By grouping terms of equal weight in the sum, we arrive at a sum involving only $n+1$ terms
\begin{gather}
\Prob(T_{\vec{s}},L,S) = \left(1-\dfrac{p}{2}\right)^{2n} \sum_{i = 0}^{n}A_{i}(\vec{s},L) \thinspace \tilde{p}^{i}
\label{dqmld-we-coeff} \\
\text{where }A_{i}(\vec{s},L) = \left| \{S\in\mathcal{S}\thickspace : \thickspace \wtt(E = T_{\vec{s}}\cdot L\cdot S) = i\}\right| , \text{ and } \sum_{i=0}^{n}A_{i}(\vec{s},L) = 2^{n-k}. \label{we-coeff-sum}
\end{gather}
The coefficients $\{A_{i}(\vec{s},L)\}_{i=0}^{n}$ are called the {\em weight enumerators of the coset} associated to $\vec s$ and $L$. Note that the coset weight enumerators play a very important role in estimating the decoder performances of both $\QMLD$ as well as $\DQMLD$.

The sum in Eq.~({}\ref {dqmld-we-coeff}{}) is now over polynomially many terms unlike its previous form Eq.~({}\ref {eq-logical-prob}{}). Computing such a sum for each logical operator and subsequently optimizing over their values would solve $\DQMLD$. An $[[n,k]]$ stabilizer code has $|\cL| = 4^{k}$, implying that even if the weight enumerators can be computed efficiently, a polynomial-time optimization cannot be performed over the different cosets labeled by $L$, unless $k \in \cO(\log(n))$, which is the regime that we are interested in. Furthermore, we believe $\DQMLD$ is at least as hard otherwise, i.e for $k \in \Omega(\log(n))$.

At this stage, we like to remark that though $\QMLD$ and $\DQMLD$ are stated as decision problems in \cite{heish-gall-11,kuo-lu-12}, the problem of practical interest is one of determining a Pauli operator that maximizes the respective probabilities in Defs.\nobreakspace(\ref{defn-QMLD}) and (\ref {def-DQMLD}). To enable a decision problem formulation in \cite{heish-gall-11,kuo-lu-12}, both $\QMLD$ and $\DQMLD$ are defined to take as input an additional constant $c$. Subsequently, $\DQMLD$ would be made to decide the existence of an $L\in\cL$ whose probability Eq.~({}\ref {eq-logical-prob}{}) is at least $c$. This appears to be achieving more than what is really expected from of $\DQMLD$ (or $\QMLD$) in practice. In particular, by varying this constant $c$, along with the input in Def.~({}\ref {def-DQMLD}{}), one could use the oracle to not only learn the optimal correction but also the probability of its equivalence class. The latter is not necessary to perform error correction and thus gives more power to the decoder oracle than it should. This is why we formulated $\DQMLD$ as a function problem that does not explicitly reveal the probability of the equivalence class of the optimal correction, and therefore we consider it to be closer to the real world decoding problem.

\subsection{Importance of degeneracy}

Before addressing the computational complexity of degenerate decoding, we close this section with a discussion of its practical relevance. The two decoders $\QMLD$ and $\DQMLD$ will provide different answers whenever the most likely equivalence class does not contain the error with the largest probability. Consider the hypothetical scenario where the class of error $L_1$ contains a single error of low weight $a$ and $2^{n-k}-1$ errors of high weight $b \gg a$, and that the class $L_2$ contains $2^{n-k}$ errors of intermediate weight c, $a\ll c\ll b$. The $\QMLD$ would chose the error from the class $L_1$, because it is the most likely error. On the other hand, the probabilities of these two classes are given by
\begin{equation}
\Prob(L_1) \propto \tilde  p^a + (2^{n-k}-1) \tilde p^b \approx \tilde p^a \ {\rm and} \ P(L_2) \propto 2^{n-k} \tilde p^c.
\end{equation}
Thus, we see that $\DQMLD$ would provide a different answer if $P(L_1) < P(L_2)$, or equivalently $\tilde p > 2^{-\frac{n-k}{c-a}}$, and that the two decoders would agree otherwise. Thus, for sufficiently low error rates, and in particular in the extreme limit $p \leq 2^{k-n}$, degeneracy does not affect the decoding problem. 

More importantly, degeneracy becomes unimportant when the {\em failure rate} of the code is very low, which doesn't necessarily require a low physical noise rate. Remember Sec.~({}\ref {sec-deg-dec}{}) that the failure rate $2\epsilon$ of the code is related to the $\DQMLD$ promise gap $\Delta = 1-4^{-k}\epsilon$. The following lemma, whose proof is presented in App.~({}\ref {app-large-gap}{}), shows that for $\epsilon = 2^{k-n}$, $\QMLD$ provides the same answer as $\DQMLD$, which in turn implies that $\DQMLD$ is in $\NP$ with such a large gap.
%\DP{The paper contains no proof of this lemma: the appendix that you refer to states and proves a different lemma, with no mention whatsoever of this lemma.}
\begin{lemma} \label{lemma-large-gap}
With a promise gap $\Delta = 1-2^{-n-k}$ and on an independent $X-Z$ channel, $\DQMLD$ and $\QMLD$ are equivalent.
\end{lemma}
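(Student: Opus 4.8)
The plan is to reduce $\DQMLD$ with this particular promise gap to a minimum‑weight decoder, i.e.\ to $\QMLD$, by showing that the optimal logical class $L^\star$ is precisely the \emph{unique} coset of minimum symplectic weight consistent with $\vec s$; then the logical component of the minimum‑weight error returned by $\QMLD$ — extracted from $E_{\QMLD}(\vec s)$ by poly‑time symplectic Gaussian elimination in the canonical basis — is exactly the $\DQMLD$ output, so the two decoders make the same logical decision. Concretely, fix $\vec s$ and, for each $L\in\cL$, let $f(L) = \sum_{i=0}^{n} A_i(\vec s, L)\,\tilde p^{\,i}$ be the unnormalized class probability of Eq.~(\ref{dqmld-we-coeff}), where $\tilde p = \slfrac{p}{(2-p)}\in[0,1)$ for $p<1$; the ratios $f(L)/f(L^\star)$ equal the ratios $\Prob(L|T_{\vec s})/\Prob(L^\star|T_{\vec s})$ appearing in the promise Eq.~(\ref{eq-DQMLD-gap}). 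Write $w(L) = \min_{S\in\cS}\wtt(T_{\vec s}\cdot L\cdot S)$ for the coset minimum weight; this is the quantity $\QMLD$ minimizes over all $L$ and $S$. The only analytic input needed is the two‑sided estimate
\begin{equation}
\tilde p^{\,w(L)} \;\le\; f(L) \;\le\; 2^{n-k}\,\tilde p^{\,w(L)},
\end{equation}
valid because $A_{w(L)}(\vec s,L)\ge 1$, because $\sum_i A_i(\vec s,L)=2^{n-k}$ (Eq.~(\ref{we-coeff-sum})), and because $\tilde p^{\,i}\le\tilde p^{\,w(L)}$ for every $i\ge w(L)$.

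Next I would rewrite the promise as $f(L)/f(L^\star)\le 2^{-n-k}$ for all $L\neq L^\star$ and derive a contradiction from the assumption that some $L\neq L^\star$ has $w(L)\le w(L^\star)$. Indeed, the two‑sided estimate then gives
\begin{equation}
\frac{f(L)}{f(L^\star)} \;\ge\; \frac{\tilde p^{\,w(L)}}{2^{n-k}\,\tilde p^{\,w(L^\star)}} \;=\; \frac{\tilde p^{\,w(L)-w(L^\star)}}{2^{n-k}} \;\ge\; \frac{1}{2^{n-k}} \;=\; 2^{k-n},
\end{equation}
and since $2^{k-n}>2^{-n-k}$ for every $k\ge 1$ — and, when $w(L)<w(L^\star)$, the extra factor $\tilde p^{-1}>1$ makes the inequality strict for $k=0$ as well — this contradicts the promise. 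Hence $w(L^\star)<w(L)$ for all $L\neq L^\star$: the optimal logical class is the unique minimum‑weight coset, so every minimum‑weight error consistent with $\vec s$, and in particular $E_{\QMLD}(\vec s)$, has logical component $L^\star$. Conversely the $\DQMLD$ answer $L^\star$ is by construction the logical part of every minimum‑weight representative, so the two decoders agree. Since $\QMLD$ on the independent $X$–$Z$ channel is exactly minimum‑weight decoding and hence lies in $\NP$, this equivalence also yields $\DQMLD\in\NP$ under the promise gap $\Delta = 1-2^{-n-k}$.

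The argument is essentially arithmetic once the two‑sided estimate is in place; the only point that deserves care — and the natural place for the bound to be tight — is checking that the chosen $\Delta=1-2^{-n-k}$ is simultaneously large enough to (i) rule out two distinct logical classes sharing the minimum coset weight and (ii) pin the unique minimizer to $L^\star$ rather than merely to some minimum‑weight class. Both are governed by the inequality $2^{k-n}>2^{-n-k}$ together with $\tilde p<1$, so one must keep track of the degenerate regimes: $k=0$ is trivial since then $|\cL|=4^{k}=1$ and the promise is vacuous, while the hypothesis $p<1$ (which holds throughout the physical regime $p\in[0,\slfrac{1}{2}]$, where $\tilde p\le\slfrac{1}{3}$) is what supplies the decay $\tilde p<1$ underpinning the estimate. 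A depolarizing‑channel version follows verbatim with $\wtt$ replaced by $\wt$ and $\tilde p$ by $\slfrac{p}{3(1-p)}$.
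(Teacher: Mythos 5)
Your proposal is correct: the two-sided bound $\tilde p^{\,w(L)}\le f(L)\le 2^{n-k}\tilde p^{\,w(L)}$ is valid (it only uses $A_{w(L)}\ge 1$, $\sum_i A_i=2^{n-k}$ and $\tilde p\le 1$), the promise with $1-\Delta=2^{-n-k}$ then indeed forces $w(L^\star)<w(L)$ for every $L\neq L^\star$ whenever $k\ge 1$, and the $k=0$ case is vacuous, so the minimum-weight error returned by $\QMLD$ lies in the class $L^\star$ output by $\DQMLD$. The route differs from the paper's in its bookkeeping: the paper (App.~B) fixes the class $L_{\min}$ containing a minimum-weight error and lower-bounds its \emph{conditional} probability against the full syndrome probability, using the weight enumerators of $\cS$ in the numerator and of $\cN(\cS)$ in the denominator to get $\Prob(L_{\min}|\vec s)\ge \left(\sum_i B^{\perp}_i\right)^{-1}=2^{-n-k}$ for every syndrome and every $k$, so that a promise with gap $1-2^{-n-k}$ can only be satisfied by $L^\star=L_{\min}$. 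You instead compare classes pairwise through their coset minimum weights, using only $|\cS|=2^{n-k}$. Your version is slightly sharper in one respect -- it shows that any gap with $1-\Delta<2^{k-n}$ already suffices, so the constant $2^{-n-k}$ is not tight for $k\ge 1$ -- at the price of the explicit $k\ge 1$ versus $k=0$ split and the explicit $\tilde p\le 1$ hypothesis (which the paper also uses implicitly); the paper's normalizer-based bound explains directly where the constant $2^{-n-k}$ in the statement comes from, since it equals $1/|\cN(\cS)|$. Both arguments rest on the same elementary ingredients (probability monotone in symplectic weight plus group-size counting), so they are close cousins rather than fundamentally different proofs.
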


In the light of these observations, one might imagine that in general, $\DQMLD$ can at most offer a marginal improvement over $\QMLD$, i.e. that for a given code and noise rate, the decoding error probability of $\QMLD$ is upper bounded by a function of the decoding error probability of $\DQMLD$. There is strong evidence that this is not the case however. Monte Carlo simulations \cite{HPP01a,WHP03a,BAOK12a} have shown that $\QMLD$ and $\DQMLD$ achieve different error thresholds with Kitaev's topological code. (This statement is equivalent to the fact that the critical disorder strength separating the ordered from the disordered phase in the random bound Ising model decreases below the Nishimori temperature.) Thus, for noise rates $p$ falling between these two thresholds, the failure probability of $\DQMLD$ tends to 0 as the number of qubit $n$ increases, while the failure probability of $\QMLD$ tends to $1-4^{-k}$ (the failure probability of a random guess), so the performances of both decoders can be significantly different.

Degeneracy can also severely impact the performances of certain decoding algorithms. Due to degeneracy of quantum errors, the ability to correct errors does not imply that the a posteriori marginal error probability over individual qubits is sharply peaked, in contrast to the classical setting.  This is the case in particular for low density parity check (LDPC) codes. These codes  have the property of admitting a set of stabilizer generators $S_j$ of weight bounded by a constant. As a consequence, the weight of equivalent errors $E$ and $E'$ related by a stabilizer generator $E' = ES_j$ will differ at most by a small constant. Thus, we expect degeneracy to play an important role: each equivalence class contains many errors of roughly the same weights. As discussed in \cite{david-08}, conventional decoding algorithms for LDPC codes (belief propagation \cite{AM00a}) are {\em marginal decoders} in the sense that they optimize the probability of error independently for each qubit. But in the presence of degeneracy, we can have a probability sharply peaked over a single equivalence class of errors---ensuring the success of $\DQMLD$---but yet have a very broad marginal distribution over individual qubits---leading to a failure of a marginal decoder. So the existence of a good, general purpose decoder for quantum LDPC codes, playing a role analogous to belief propagation in the classical setting, remains an outstanding open question. 

This situation is best illustrated with Kitaev's topological code \cite{Kit03a}. In this code, errors correspond to strings on a regular square lattice, and error syndromes are located at the endpoints of the strings. The weight of an error is equal to the length of the corresponding string. Lastly, the different equivalence classes of errors correspond to the homology classes of the lattice. For a syndrome configuration shown at Fig.~({}\ref {fig-toric-code-marginal}{}), all the short paths have the same homology, so the probability is sharply peaked over one equivalence class. But there are several distinct strings of the same length compatible with this syndrome, so the marginal error probability over individual qubits is very broad.
\begin{figure}[tbh]
\centering
\subfloat[]{\label{fig-toric-code-syndromes}\includegraphics[scale=0.45]{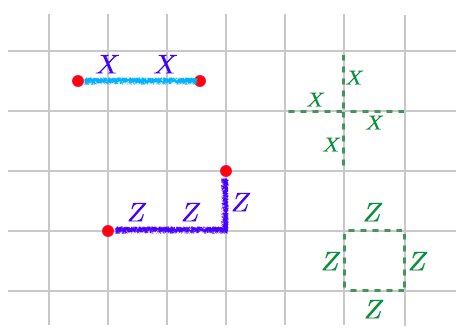}}
\hspace{0.4cm}
\subfloat[]{\label{fig-toric-code-marginal}\includegraphics[scale=0.4]{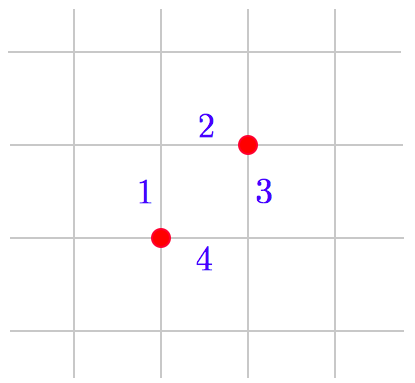}}
\label{fig-toric-code-errors}
\caption{The toric code has two type of stabilizer generators, shown here in green Fig.~({}\ref {fig-toric-code-syndromes}{}), centered around each face and each vertex of the lattice. Strings of $Z$ errors on the lattice generate vertex syndromes at their endpoints, while strings of $X$ errors on the dual lattice generate face syndromes at their endpoints. Strings with same holonomy represent equivalent errors. In Fig.~({}\ref {fig-toric-code-marginal}{}) the qubits are labelled $1-4$ for easy reference. There are many errors consistent with this syndrome, the lowest weight of which are $Z_{1}Z_{2}$ and $Z_{3}Z_{4}$, and by symmetry $\Prob(Z_{1}Z_{2}) = \Prob(Z_{3}Z_{4}) = \max_{E\in \overline{\cG}_{n}}\Prob(E |\vec s)$. Applying either of these operators serves as a valid correction since they have the same holonomy. Errors with a different holonomy have a probability that decreases exponentially with the lattice size, so $\DQMLD$ would successfully decode. But marginally, qubits 1, 2, 3, and 4 all have identical probabilities, so a marginal decoder would either correct with $Z_1Z_2Z_3Z_4$ or the identity, none of which are acceptable corrections.}
\end{figure}

Lastly, we note that to achieve the true capacity of certain quantum channels (as opposed to the single-shot capacity), it is necessary to encode the information in a degenerate code \cite{SS07b}. In other words, there are certain channels that could not be used to send any quantum information if a non-degenerate code was used, but can reliably do so at a finite rate with degenerate codes. We do not know however if degeneracy also needs to be taken into account during the decoding process to realize this. In particular, the example in \cite{SS07b} uses a generalization of Shor's code \cite{Sho95a}, for which $\QMLD$ and $\DQMLD$ always yield the same output. We know of only a few examples of codes for which $\DQMLD$ can be computed efficiently, namely concatenated codes \cite{poulin-06}, convolutional codes \cite{dav-emi-12}, and Bacon-Shor codes \cite{preskill-13}. There exist heuristic methods to take degeneracy into account in topological codes \cite{dav-gui-10} and turbo codes \cite{poulin-tillich-09}.

%%%%%%%%%%%%%%%%%%%%%%%%%%%%%%%%%%%%%%%%%%%%%%%%%%%%%%%%%%
\section{Complexity of the decoding problem} \label{sec-hardness}

The one-to-one correspondence between $[n,k]$ stabilizer codes and $(2n,k)$ symplectic linear codes \cite{got-phd-97} is used in \cite{heish-gall-11,kuo-lu-12} to show that a solution for $\QMLD$ can be used to decide an $\NPComplete$ problem in polynomial time. Consequently, $\QMLD\in\NPComplete$.

$\DQMLD$ was shown to be $\NPHard$ for the case of an independent $X-Z$ channel in \cite{heish-gall-11} and depolarizing channel in \cite{kuo-lu-12}, using a reduction from a $\NPComplete$ problem pertaining to classical linear code. We now state our main result, which establishes that $\DQMLD$ is in fact much harder than what there previous results anticipated.
\begin{theorem}\label{main-theorem}
%{Hardness of $\DQMLD$:}
$\DQMLD$ (c.f. Def.~({}\ref {def-DQMLD}{})) of an $[[n,k=1]]$ stabilizer code on an independent $X-Z$ channel and with a promise gap $\Delta \leq 2[2 + n^{\lambda}]^{-1}$, with $\lambda = \Omega(\textsf{polylog}(n))$, is in $\sharpPComplete$.
\end{theorem}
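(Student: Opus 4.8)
\medskip
\noindent\textbf{Proof strategy.} The plan is to prove the two halves of $\sharpPComplete$-ness separately. For membership --- in the Turing sense appropriate to a function problem, i.e.\ $\DQMLD\in\FP^{\sharpP}$ --- I would start from the closed form of Eq.~(\ref{dqmld-we-coeff}): on an independent $X$--$Z$ channel the probability of a logical class $L$ given syndrome $\vec{s}$ is $\left(1-\frac{p}{2}\right)^{2n}\sum_{i=0}^{n}A_{i}(\vec{s},L)\,\tilde{p}^{\,i}$, a polynomial in $\tilde{p}$ whose coefficients are the coset weight enumerators $A_{i}(\vec{s},L)$. Each $A_{i}(\vec{s},L)$ counts the strings of symplectic weight $i$ in a coset of the classical code $\eta(\cS)$, and the associated decision problem (``is there such a string'') is in $\NP$, so $A_{i}(\vec{s},L)\in\sharpP$. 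For $k=1$ there are only four logical classes, so polynomially many oracle calls yield all four probability polynomials exactly; since $p$, hence $\tilde{p}$, is a rational given with polynomially many bits, the four values can be compared exactly in polynomial time and the maximizer returned (unique by the promise).

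For hardness it suffices, by Theorem~\ref{WE-sharpP-complete}, to compute $\WE_{i}(\cC)$ for an arbitrary classical $[m,\kappa]$ linear code $\cC$ and $i=\Omega(\textsf{polylog}(m))$ --- a $\sharpPComplete$ task --- using polynomially many calls to a $\DQMLD$ oracle. The first step is a construction: from $\cC$ I would build an $[[n,1]]$ CSS stabilizer code $\cQ(\cC)$ on $n=\poly(m)$ qubits, with the $Z$-type stabilizers chosen to span $\cC$ on a data block, a block of auxiliary ``inflation'' qubits whose purpose is to push the minimum weight of every auxiliary code above the weight range that matters, and the $X$-type stabilizers completed so that exactly one logical qubit survives. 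In the symplectic picture each of the four logical-class probabilities is the weight enumerator of a coset of $\eta(\cS)$, and one checks that it factors as $W_{\cC}(\tilde{p})$ (or a coset weight enumerator of $\cC$) times the weight enumerator of an explicitly constructed auxiliary code, which is poly-time computable by design. The construction must also, through the choice of logical representatives and of the syndrome, keep two of the four logical classes uniformly dominated over the entire query schedule, so that $\DQMLD$ effectively acts as a comparator between the remaining two; one of those two is equipped, via an auxiliary ``tuning'' block, with a $\cC$-independent, poly-time computable weight enumerator depending monotonically on a tunable integer parameter $N$, so that a single $\DQMLD$ call decides whether $W_{\cC}(\tilde{p})$ exceeds a known tunable quantity of order $N\,\tilde{p}^{\,a}$.

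Given such a construction the extraction is standard: binary search on $N$ evaluates $W_{\cC}(\tilde{p})$ at the chosen error rate up to known factors, and repeating over $m+1$ distinct error rates and Lagrange-interpolating the polynomial $W_{\cC}$ of degree at most $m$ recovers every coefficient $\WE_{j}(\cC)$, in particular $\WE_{i}(\cC)$, with $\poly(m)=\poly(n)$ oracle calls total. Two routine points must be checked. First, each instance produced has a strict maximizer --- which holds for all but finitely many choices of the error rate at each step --- with relative gap at least $2^{-\poly(n)}$, since the probabilities are $\poly(n)$-bit rationals bounded by $1$; hence one may always declare a promise gap $\Delta$ respecting the hypothesis $\Delta\le 2[2+n^{\lambda}]^{-1}$ with $\lambda=\Omega(\textsf{polylog}(n))$, and every query is a legitimate oracle input. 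Second, the (possibly exponentially large) integers appearing in the interpolation have only polynomially many bits and so are handled in polynomial time.

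The step I expect to be the main obstacle is the explicit construction of $\cQ(\cC)$, where three requirements must hold simultaneously: (i) the output must be a legal $[[n,1]]$ stabilizer code --- commuting checks, exactly one logical qubit --- with only polynomial overhead; (ii) two of the four logical classes must provably stay dominated over the whole schedule of syndromes and error rates, so that the oracle genuinely behaves as the intended comparator; and (iii) every auxiliary weight enumerator multiplying $W_{\cC}(\tilde{p})$ must be poly-time computable, so that it can be divided out before interpolation. Making these three conditions hold together --- through the embedding of $\cC$ in the $Z$-stabilizers, the completion of the $X$-stabilizers, the choice of logical representatives, the inflation and tuning blocks, and the syndrome/error-rate schedule --- is where the real work lies; the remainder is bookkeeping.
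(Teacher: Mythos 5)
Your overall skeleton is the same as the paper's: membership exactly as in Lemma.~(\ref{dqmld-sharpP}), and hardness by Turing-reducing the weight-enumerator problem of Thm.~(\ref{WE-sharpP-complete}) to $\DQMLD$, engineering an $[[\cdot,1]]$ code in which only two logical classes compete, one carrying $\WE_i(\cC)$ and the other a tunable, known reference, and then using the oracle as a comparator plus binary search and interpolation. But the two places where you defer the work are precisely where the paper's proof lives, and one of them is not just ``bookkeeping''. First, the construction: you need (i) every query to be a legal instance of $\DQMLD$ \emph{on an i.i.d.\ independent $X$--$Z$ channel with the same $p$ on all qubits} (that is the channel in the theorem), while (ii) two logical classes are exactly suppressed and (iii) a tunable knob shifts the reference class. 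In the paper this is done by making certain qubits effectively error-free and one qubit carry a tunable $p_Z=q$ channel, and then showing in App.~(\ref{app-concatenate}) that these non-uniform effective channels can be simulated within the uniform $X$--$Z$ channel by concatenating with a $2n\times 2n$ Shor code and feeding the oracle a suitably chosen nontrivial syndrome. Your proposal never addresses how your ``inflation'' and ``tuning'' blocks are compatible with the uniform channel; without something like the Shor-code gadget, your queries are not instances of the problem whose hardness is being claimed. Relatedly, your plan to ``divide out'' an auxiliary weight enumerator is avoided in the paper by making the auxiliary qubits error-free, so nothing $\cC$-independent multiplies the enumerator in the first place.

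Second, and more fundamentally, your treatment of the promise gap would fail. The binary search (over your integer $N$, or over $p$ in the paper) necessarily drives the instance toward the regime where the two competing class probabilities are within relative distance $\Delta$ of each other; there the promise is violated and the oracle may answer arbitrarily. Consequently each ``evaluation'' of $W_{\cC}(\tilde p)$ is only an inequality with relative slack $\Delta$, as in Eq.~(\ref{CPv}), not an exact value. You cannot fix this by ``declaring'' a gap: declaring $\Delta$ large makes the near-threshold queries promise-violating, and declaring $\Delta=2^{-\poly(n)}$ proves hardness only for an exponentially fine promise, which is a strictly weaker statement than the theorem's $\Delta\le 2[2+n^{\lambda}]^{-1}$. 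The missing step is the perturbation analysis showing that $\Delta$-accurate constraints still pin down the \emph{integer} coefficients after rounding; in the paper this is Lemma.~(\ref{lemma-gap-size}), whose matrix-norm bound is exactly the origin of the threshold $\Delta\le 2[2+n^{\lambda}]^{-1}$ and of the restriction to the first $\lambda=\textsf{polylog}(n)$ coefficients (which suffices only because Thm.~(\ref{WE-sharpP-complete}) is stated for $i=\Omega(\textsf{polylog}(n))$). Your Lagrange-interpolation step inherits the same issue: with values known only to relative accuracy $\Delta$, the conditioning of the Vandermonde system and the size of $\WE_i(\cC)\le n^{i}$ determine which coefficients survive rounding, and you also need the analogues of Lemma.~(\ref{indep-cons}) and Lemma.~(\ref{crossing-acc}) guaranteeing that the constraints you collect are independent and that the thresholds you binary-search for are separated well enough to be located with polynomially many oracle calls. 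Until the explicit code construction (with the channel-simulation gadget) and this $\Delta$-robustness analysis are supplied, the proposal is an outline of the paper's strategy rather than a proof.
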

In the following sections, we will show that for a classical binary linear code $\cC$ and $\lambda \in [0,n]$, the problem of computing $\WE_{i}(\cC)$ for $i=0,1,2,\ldots, \lambda$ is polynomial time Turing reducible to $\DQMLD$ on an independent $X-Z$ channel, for $\Delta \leq 2[2 + n^{\lambda}]^{-1}$.

In the case of a general memoryless Pauli channel, it is not always possible to express the probability of a logical class as a weight enumerator sum Eq.~({}\ref {dqmld-we-coeff}{}) with polynomially many terms in the sum. Hence, in such cases, the containment of $\DQMLD$ in $\sharpP$ is not known and we can only claim that it is $\sharpPHard$. However, whenever one can express $\Prob(L|\vec{s})$ as a sum with polynomially many terms, each of which is a $\sharpP$ function, then $\DQMLD$ can be put into $\sharpP$. 

For the independent $X-Z$ channel, the containment in $\sharpP$ is straightforward.
\begin{lemma} \label{dqmld-sharpP}
$\DQMLD$ in Def.~({}\ref {def-DQMLD}{}) on a $[[n,k]]$ stabilizer code with $k = \Omega(\log_{2}n)$ on an independent $X-Z$ channel Eq.~({}\ref {eq-prob-XZ}{}) or a depolarizing channel, is in $\sharpP$.
\end{lemma}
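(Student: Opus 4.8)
The plan is to exhibit $\DQMLD$ as a function that is computable in polynomial time with a single call to a $\sharpP$ oracle (in fact to a family of $\sharpP$ functions, one per weight $i$), together with polynomially many deterministic post-processing steps. Concretely, recall from Eq.~(\ref{dqmld-we-coeff}) that for the independent $X-Z$ channel the joint probability of a logical class is, up to the common factor $(1-p/2)^{2n}$, the polynomial $\sum_{i=0}^{n} A_i(\vec s, L)\,\tilde p^{\,i}$ in $\tilde p = p/(2-p)$, where $A_i(\vec s, L) = |\{S\in\cS : \wtt(T_{\vec s}\cdot L\cdot S) = i\}|$. The first step is to verify that each coset weight enumerator $A_i(\vec s, L)$ is a $\sharpP$ function of the input: the predicate ``$S$ is specified by an $(n-k)$-bit coefficient vector $\vec a$, and $\wtt(\eta(T_{\vec s}\cdot L\cdot \prod_j S_j^{a_j})) = i$'' is checkable in polynomial time given the symplectic representation, so by Def.~\ref{sharpP-defn} the count over $\vec a\in\{0,1\}^{n-k}$ lies in $\sharpP$.

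Next I would assemble the decoder from these counts. For each of the $|\cL| = 4^k$ logical classes $L$ and each $i\in\{0,\ldots,n\}$, query the oracle for $A_i(\vec s, L)$; since $k = \Omega(\log_2 n)$ is in the regime $k\in\cO(\log n)$ this is $\poly(n)$ many queries, each returning an integer bounded by $2^{n-k}$ and hence of polynomial bit-length. Then compute, for each $L$, the weighted sum $\sum_i A_i(\vec s,L)\,\tilde p^{\,i}$ to sufficient precision (the entries are integers and $\tilde p$ is given to finite accuracy, so this is exact rational arithmetic on polynomially many polynomially-sized numbers), and output the $L^\star$ achieving the maximum. The promise gap in Def.~\ref{def-DQMLD} guarantees the argmax is well-defined and that the required precision to resolve it is only $\poly(n)$ bits, so the comparison step is polynomial-time. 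This shows $\DQMLD\in\FP^{\sharpP}$, which is the meaning of ``$\DQMLD\in\sharpP$'' for a function problem up to the polynomial-time Turing reductions used throughout (as flagged in the discussion after the informal statement of Thm.~\ref{main-theorem}); the depolarizing-channel case is identical after replacing $\wtt$ by $\wt$ and $\tilde p$ by $p/(3(1-p))$ in Eq.~(\ref{eq-depol-err}), since there too $\Prob(E)$ depends only on the weight.

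The only genuinely delicate point is that $\DQMLD$ is a \emph{function} (indeed optimization) problem rather than a bare counting function, so membership ``in $\sharpP$'' must be read in the Turing-reduction sense already adopted in the paper; I would state this explicitly so the reader does not expect a literal $\sharpP$ witness-counting characterization. Everything else—that the coset enumerators are $\sharpP$, that there are polynomially many of them when $k=\cO(\log n)$, and that combining them is cheap—is routine given Eq.~(\ref{dqmld-we-coeff}) and the symplectic representation of Sec.~(\ref{sec-symplectic}).
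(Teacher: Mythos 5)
Your proposal is correct and follows essentially the same route as the paper's proof: express each logical-class probability via the coset weight enumerators $A_i(\vec s,L)$ of Eq.~(\ref{dqmld-we-coeff}), observe these counts are $\sharpP$ functions through the symplectic representation, and then perform a polynomial-time maximization over the polynomially many classes in the $k=\cO(\log n)$ regime, with the promise gap only mattering at exponentially small scales. You merely make explicit two points the paper leaves implicit—the $(n-k)$-bit witness underlying the $\sharpP$ membership of $A_i(\vec s,L)$ and the $\FP^{\sharpP}$/Turing-reduction reading of ``in $\sharpP$'' for a function problem—so no further changes are needed.
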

\begin{proof}
From the definition of $\DQMLD$, one can subdivide it into two problems. The first step consists of computing the probabilities of all the logical classes, corresponding to the given syndrome, followed next by an optimization over the probabilities, thereby choosing an error from the logical class with the largest probability. Since $k = \Omega(\log_{2}n)$, there are at most polynomially many logical classes and hence a naive search for the maximum amongst the set of probabilities of all the logical classes can be achieved in polynomial-time.

It now remains to show that for each logical class, its probability Eq.~({}\ref {we-coeff-sum}{}) for a general noise rate $p$, is a $\sharpP$ function. This immediately follows from the polynomial form in $p$, taken by the logical class probabilities, Eq.~({}\ref {dqmld-we-coeff}{}) and the observation that the the coefficients of the polynomial are weight enumerator coefficients of a suitable linear code Sec.~({}\ref {sec-symplectic}{}).

The promise gap $\Delta$ is irrelevant to the containment of $\DQMLD$ in $\sharpP$ unless $\Delta = o(2^{-\poly(n)})$. A gap of $2^{-\poly(n)}$ is essential since the probabilities of various logical classes are represented as bit-strings for performing arithmetic and they must differ in the first polynomially many bits.
\end{proof}

\section{Reduction} \label{sec-red-over}
In this section, we present a polynomial time algorithm that accepts as input a classical linear code $\mathcal{C}$ and outputs $\{\WE_{i}(\mathcal{C})\}_{i=0}^{n}$ as in Def.~({}\ref {WE-def-code}{}) by querying a $\DQMLD$ oracle with an independent $X-Z$ noise model and a promise gap $\Delta$ which is $\slfrac{1}{\textsf{quasi-polynomial}(n)}$.

\subsection{Reduction overview}
The correspondence between symplectic linear codes and stabilizer codes discussed in Sec.~({}\ref {sec-symplectic}{}) implies that with an independent $X-Z$ channel, the probability of a logical class is related to the weight enumerator polynomial of a classical linear code. In particular, with the trivial syndrome $\vec s = \vec 0$ and at very low noise rate $p\ll \slfrac{1}{2}$, the most likely equivalence class is always the trivial one $L = \mathbb{I}$. Hence, in this setting, the probability of the trivial logical class for a quantum code with stabilizers $\cS$ can be expressed, similar to Eq.~({}\ref {dqmld-we-coeff}{}), in terms of a corresponding classical code $\cC_{\cS}$ as follows 
\begin{gather}
\Prob(\mathbb{I}\thinspace|\thinspace\vec{0}) = \left(1-\dfrac{p}{2}\right)^{2n}\sum_{i=0}^{2n}\WE_{i}(\cC_{\cS})\tilde{p}^{i} \label{prob-I-WE}
\end{gather}

Since this probability is a polynomial of degree $2n$, determining its value for $2n+1$ distinct values of the physical noise rate $p$ would enable us to determine the weight enumerator of the corresponding classical code. There are two caveats to this approach. First, this approach only works for classical codes $\cC_{\cS}$ whose generator matrix corresponds to the symplectic representation of a quantum stabilizer code. Second, this approach requires knowledge of the probability of a certain equivalence class, while $\DQMLD$ only outputs the equivalence class with the largest probability; it does not reveal the value of the corresponding probability. 

The first caveat can easily be circumvented, for instance by padding the $n$-bit classical code with $n$  additional $0$s, thus obtaining a valid symplectic representation of a $n$-qubit quantum code (one whose stabilizer generators contain only $Z$ operators). Consequently, all stabilizers have weights between 0 and $n$ and the probability of the trivial logical class can be expressed as Eq. (\ref{prob-I-WE}), where the range of sum is up to $n$. To circumvent the second caveat, we need to use the $\DQMLD$ oracle to obtain equality constraints on the weight enumerator of $\cC$. This is done by varying the physical noise rate $p$, always keeping the syndrome trivial.  As mentioned above, at very low noise rate the optimal logical class is $\mathbb I$. Increasing the noise rate $p$, we will reach a \emph{crossing point} $p_{1}$ where the $\DQMLD$ output changes from $\mathbb{I}$ to $L^{\star} \neq \mathbb{I}$. At this point, the promise gap condition is violated, i.e, $\left|\Prob(\mathbb{I}|\vec s) - \Prob(L^{\star}|\vec s)\right| \leq \Delta \Prob(\mathbb{I}|\vec{s})$, which can be expressed in terms of weight enumerators as:
\begin{equation}
\sum_{i=0}^{n} \WE_{i}(\cC)\tilde{p}_{1}^{i} - \sum_{i=0}^{n} B_{i} \tilde{p}_{1}^{i} \leq \Delta \sum_{i=0}^{n}  \WE_{i}(\cC)\tilde{p}_{1}^{i} \label{eq-crossing-I-L}
\end{equation}
where $B_{i}$ are the weight enumerators of an affine code. In the case where $\Delta = 0$, this crossing point provides an equality condition between two polynomials, which is what we are seeking. But since these are polynomials with integer coefficients, knowing the location of a crossing point within a finite accuracy, which translate into a finite promise gap $\Delta$, is enough to determine the exact crossing point, see Lemma.~({}\ref {gap-size-qpoly}{}). As we will show, there exists a fixed range of $p$ that provably contains a unique crossing point, enabling a polynomial-time accurate determination of the crossing point using a binary search procedure.

This gives us one potential equality, but introduces more unknown coefficients $B_{i}$. To get additional linear constraints, we modify the code in a very special way, described in Sec.~({}\ref {sec-stab-cons}{}). This modification requires adding one {\em tunable} qubit and one stabilizer generator. By varying the noise rate on the tunable qubit over a range of values, we can change the location of the crossing point, and thus obtain new linear constraints relating $\{\WE_{i}(\cC)\}_{i=0}^{n}$ and the $\{B_{i}\}_{i=0}^{n}$. Repeating this procedure $2n+2$ times and making sure that all the linear constraints are linearly independent Sec.~({}\ref {sec-indep-cons}{}) enable us to determine the weight enumerator coefficients.  While the ability to change the noise rate of the tunable qubit gives us more linear constraints, it breaks the requirement that the noise model be the independent $X-Z$ channel with the same strength on all qubits. We will fix this problem in App.~({}\ref {app-concatenate}{}) by showing that the required channel can be simulated by concatenating the code with a Shor code. In fact, we will use this technique repeatedly in our proof.

\subsection{Stabilizer code construction} \label{sec-stab-cons}
Let $G$ be the $k\times n$ generator matrix of an $(n,k)$ classical linear code $\cC = \{x\in {\mathbb Z}_2^n : x = yG, \ y\in {\mathbb Z}_2^k \}$. Denote $\{g_i\}_{i=1,\ldots k}$ the rows of $G$ and let $\{g_i\}_{i=k+1\ldots n}$ be a generating set of the complement of the row space of $G$, i.e. in such a way that $\{g_i\}_{i=1\ldots n}$ span $\mathbb Z_2^n$. Construct a matrix $\tilde G $ with rows$\{g_i\}_{i=1\ldots n}$. This matrix is full rank, and therefore has an inverse $H$ that can be computed efficiently, and obeys $\tilde GH^{T} = \mathbb I$. Denote the rows of $H$ by $\{h_i\}_{i=1}^{n}$. 
 
We define a  $[[2n-k-1,1]]$ quantum code with stabilizer generators and logical operators  given by
\begin{align}
S_i &= Z^{g_i} & {\rm for }\ i=1,\ldots,k \\
S_{k+i} &= Z^{g_{k+i}} \otimes Z_{n+i} & {\rm for }\ i=1,\ldots,n-k-1 \\
S_{n-1+i} &= X^{h_{k+i}} \otimes X_{n+i} & {\rm for }\ i=1,\ldots,n-k-1 \\
\overline Z &= Z^{ g_n} &\\
\overline X &= X^{h_n}, &
\end{align}
where it is implicitly assumed that operators are padded to the right by identities to be elements of $\cG_{n+k-1}$. The validity of the resulting code can be verified from the fact that 
\begin{enumerate}
\item There are in total $2n-k-1$ qubits.
\item The $2n-k-2$ stabilizer generators are independent. This follows from the linear independence of the $h_i$ and the linear independence of the $g_i$, together with the fact that $X$-type operators are linearly independent of $Z$ type generators.
\item The stabilizer generators mutually commute. This is trivial among the first $n-1$ generators as they contain only $Z$ operators and similarly among the last $n-k-1$ last generators. Between these two sets, the commutation follows from the fact that $h_i \cdot g_j = 0$ except when $i=j$, in which case the presence of additional $X$ and $Z$ on the $n+i$th qubit ensures commutation. 
\item The logical operators commute with the stabilizer generators. This follows from the fact that $h_i\cdot g_j = 0$ for $i\neq j$, and the fact that $X$-type operators commute among themselves and similarly for $Z$-type operators.
\end{enumerate}

As discussed in Sec.~({}\ref {sec-deg-dec}{}), the probability of the trivial logical class $\mathbb I$ given a trivial syndrome $\vec s = \vec 0$ is simply the sum of the probabilities of all stabilizer group elements. Suppose now that the last $n-k-1$ qubits are error-free, while the other qubits are subject to an independent $X-Z$ channel. Then, the probability of an element of $\cS$ is zero if it contains a generator $S_i$ from the above list with $i>k$. Otherwise, this element of $\cS$ can be written as $S = Z^x \otimes {\mathbb I}$ for some $x\in {\mathbb Z}_2^n$ and its probability is $(\slfrac{p}{2})^{|x|}(1-\slfrac{p}{2})^{2n-|x|}$. We conclude that the probability of the trivial logical class is given by Eq.~({}\ref {prob-I-WE}{}) with $\cC$ the classical code defined by the generating matrix $G$.

Constraints on the weight enumerator polynomial will be obtained by finding crossing points where $\Prob({\mathbb I}|\vec 0) \approx \Prob(L|\vec 0)$ with $L \neq {\mathbb I}$. For technical reasons, we would like to be able to choose which $L$ will be the one realizing the crossing. This is because we want to force the crossing to happen with the same $L$ every time. To do this, we will modify the stabilizer by adding an extra qubit and making the transformations
\begin{align}
S_i &\rightarrow S_i\otimes {\mathbb I} \\
\overline Z &\rightarrow \overline Z \otimes {\mathbb I} \\
\overline X &\rightarrow \overline X \otimes  X 
\end{align}
to the stabilizers and logical operators, and adding the following stabilizer generator
\begin{equation}
S_{2n-k} = \overline Z \otimes Z.
\end{equation}
This defines an $[[2n-k,1]]$ stabilizer code, and its validity can easily be verified given the commutation relations worked out above. Moreover, if we assume that the added $(2n-k)$th qubit is also error-free, then the probability of the trivial logical class $\mathbb I$ given a trivial syndrome $\vec s = \vec 0$ is unchanged, and moreover the only other logical class with non-zero probability is the one associated to $\overline Z$, i.e. $\Prob(\overline X|\vec 0) = \Prob(\overline Y|\vec 0) = 0$.  

We need to perform one last modification to the code in order to be able to tune the crossing point, and hence obtain linearly independent equalities between $\Prob({\mathbb I}|\vec 0)$ and $\Prob(\overline Z|\vec 0)$. This transformation is quite similar to the previous one, and given by
\begin{align}
S_i &\rightarrow S_i\otimes {\mathbb I} \\
\overline Z &\rightarrow \overline Z \otimes Z \\
\overline X &\rightarrow \overline X \otimes  \mathbb I 
\end{align}
and adding the following stabilizer generator
\begin{equation}
S_{2n-k+1} = \overline X \otimes X.
\end{equation}
For this last qubit, we will assume a noise model where $p_X = p_Y = 0$, $p_Z = q$, and $p_{\mathbb I} = 1-q$ with $q$ being a tunable parameter. With this last choice, the only two non-zero probabilities of logical class conditioned on the trivial syndrome are given by
\begin{align}
\Prob({\mathbb I}|\vec 0) &= \frac 1\cZ (1-q) \sum_{i=0}^{n}  \WE_{i}(\cC)\tilde{p}_{1}^{i} \label{stab-cons-I} \\
\Prob(\overline Z|\vec 0) &= \frac1\cZ q\sum_{i=0}^{n} B_{i} \tilde{p}_{1}^{i} \label{stab-cons-Z}, 
\end{align}
where $\tilde{p} = \slfrac{p}{(2-p)}$ as above, $\cZ$ is a suitable normalization factor, and $B_i$ are the weight enumerators of the affine code associated to the $\overline Z$ logical class
\begin{equation}
B_i = \big| \{ x \in  \cC+g_n :  |x| = i\} \big|.
\end{equation}
A crossing point is observed when
\begin{equation}
v \sum_{i=0}^{n}  \WE_{i}(\cC)\left(\dfrac{\tilde{p}_{1}}{2}\right)^{i} - \sum_{i=0}^{n} B_{i} \left(\dfrac{\tilde{p}_{1}}{2}\right)^{i} \leq \Delta v\sum_{i=0}^{n}  \WE_{i}(\cC)\left(\dfrac{\tilde{p}_{1}}{2}\right)^{i} \label{CPv}
\end{equation}
where $v = (1-q)/q$ is a tunable parameter over the positive reals. Changing the value of $v$ will change the crossing point between these two logical classes, and provide linear constraints between two degree $n$ polynomials. If we can identify $2n+1$ such crossing points, it would provide enough information to retrieve the two polynomials, and hence solve the weight enumerator problem.

\subsection{Finding crossing points} \label{sec-crossing}
At this point, we have a deterministic procedure that, given any classical linear code $\cC$, can be used to generate linear constrains on its weight enumerator coefficients. Clearly, the overhead in the runtime of this procedure is the time required to spot a crossing point. A crossing point can potentially be observed at a physical noise rate $p$ anywhere between 0 and 1. One obvious indication of a crossing point is the switch in the output of the $\DQMLD$ oracle as we move $p$ across the crossing point. However if we move $p$ across two crossing points, we will not notice any net switch in the outputs of the $\DQMLD$ oracle. For this reason, we now want to restrict the values of $p$ to a range where we can prove that there is at most one crossing point. This will be possible by restricting the tunable parameter $v$ to a small interval near 0.

\begin{lemma} \label{gap-size-qpoly}
Given the stabilizer code defined above with a tunable parameter $v \leq(1-\Delta)\thinspace n^{-d}$, where $1 \leq d \leq |g_{n}|\leq n$ is the distance between $\cC$ and $\cC+g_n$, then there exists exactly one crossing point between the pair of logical classes $\mathbb{I}$ and $\overline{Z}$ in the interval $0\leq p\leq \slfrac{1}{n}$.
\end{lemma}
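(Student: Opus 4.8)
The plan is to analyze the sign of the quantity
\begin{equation}
F(p) = v\sum_{i=0}^{n}\WE_i(\cC)\left(\frac{\tilde p}{2}\right)^i - \sum_{i=0}^{n}B_i\left(\frac{\tilde p}{2}\right)^i \nonumber
\end{equation}
appearing in Eq.~(\ref{CPv}) as $p$ ranges over $[0,1/n]$, and to show it changes sign exactly once on this interval. First I would record the two anchor facts about the coefficients: $\WE_0(\cC)=1$ (the zero codeword) while $B_0=0$ (the coset $\cC+g_n$ does not contain $0^n$, since $g_n$ is not in the row space of $G$), and more precisely $B_i=0$ for all $i<d$ where $d=|g_n|\le n$ is the distance between $\cC$ and $\cC+g_n$; also $\WE_i(\cC)\le\binom{n}{i}\le n^i$ and $B_i\le\binom{n}{i}\le n^i$. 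So at $p=0$ we have $F(0)=v\cdot 1 - 0 = v>0$, i.e.\ the $\mathbb I$ class strictly dominates and $\DQMLD$ outputs $\mathbb I$. At the other end I would show $F(1/n)<0$: the $B$ sum is at least its lowest-order term $B_d(\tilde p/2)^d$ with $B_d\ge 1$, whereas the $v\WE$ sum is at most $v\sum_i n^i(\tilde p/2)^i$; since $\tilde p = p/(2-p)\le 1/(2n-1) < 1/n$ at $p=1/n$, the geometric series $\sum_i (n\tilde p/2)^i$ is bounded by a constant, so the $v\WE$ sum is $O(v)$, and the hypothesis $v\le (1-\Delta)n^{-d}$ forces it below $B_d(\tilde p/2)^d \ge (\tilde p/2)^d$, roughly $\sim n^{-d}$ up to constants. (I would need to be a little careful matching the constants here, choosing the interval endpoint and the bound on $v$ so the inequality is clean — this is the one spot with real bookkeeping.) Hence $F$ goes from positive to negative, so by continuity there is at least one crossing, and the promise-gap version in Eq.~(\ref{CPv}) is triggered somewhere in the interval.

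For \emph{uniqueness} I would show $F$ is strictly decreasing on $[0,1/n]$, or at least that it has no second sign change there. Change variables to $u=\tilde p/2$, which is an increasing function of $p$, so it suffices to work with $\tilde F(u)=v\sum_i\WE_i u^i-\sum_i B_i u^i$ on the corresponding $u$-interval $[0,u_{\max}]$ with $u_{\max}\le 1/(2(2n-1))$. Write $\tilde F(u) = v - \sum_{i\ge 1}(B_i - v\WE_i)u^i$. The subtracted series is dominated by its lowest nonvanishing term $B_d u^d$ once $u$ is small: for $u\le u_{\max}$ the tail $\sum_{i>d}(B_i+v\WE_i)u^i \le \sum_{i>d} 2n^i u^i = 2\,(nu)^{d+1}/(1-nu)$ is smaller than $B_d u^d\ge u^d$ by a factor that is $O(nu)=O(1/n)$, hence negligible. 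So on this interval $\tilde F(u)$ behaves like $v - B_d u^d$ up to a small correction, which is manifestly strictly decreasing in $u$; more carefully, $\tilde F'(u) = v\sum_i i\WE_i u^{i-1} - \sum_i iB_i u^{i-1}$, and the same lowest-order-term-dominates estimate shows $\tilde F'(u)<0$ throughout $[0,u_{\max}]$. A strictly monotone function has at most one zero, giving uniqueness of the crossing point on $0\le p\le 1/n$.

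The main obstacle I anticipate is purely quantitative: pinning down the constants so that (i) $F(1/n)<0$ genuinely follows from $v\le(1-\Delta)n^{-d}$ and not merely from $v\le c\cdot n^{-d}$ for some unspecified $c$, and (ii) the "lowest-order term dominates" estimates for both $\tilde F$ and $\tilde F'$ hold on the \emph{entire} interval $p\in[0,1/n]$ rather than just asymptotically. Both reduce to bounding geometric series of the form $\sum_{i}(n\tilde p)^i$ with $n\tilde p\le n/(2n-1)<1$ bounded away from $1$, so they are routine, but the choice of $1/n$ as the endpoint (versus, say, a slightly smaller constant over $n$) and the exact placement of the factor $2$ inside $(\tilde p/2)^i$ versus the $n^i$ bound on $B_i$ need to be reconciled; I expect the stated bound on $v$ is exactly what makes this work with room to spare. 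Everything else — continuity, the intermediate value theorem, and monotonicity implying a unique zero — is immediate.
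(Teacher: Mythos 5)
Your overall plan (exhibit a sign change of $F(p)=v\sum_i\WE_i(\cC)u^i-\sum_iB_iu^i$ inside the window, then argue it can happen only once) has the same shape as the paper's proof, but the uniqueness step contains a genuine gap: $F$ is \emph{not} strictly decreasing on the interval in general. Writing $A(u)=\sum_i\WE_i(\cC)u^i$ and $B(u)=\sum_iB_iu^i$, we have $F'(u)=vA'(u)-B'(u)$; near $u=0$ the positive part behaves like $v\,i_0\WE_{i_0}(\cC)\,u^{i_0-1}$, where $i_0$ is the minimum weight of a nonzero codeword of $\cC$, while $B'(u)=O(u^{d-1})$. Whenever $i_0<d$ (perfectly generic, since $d$ is the coset's minimum weight and can far exceed the code's distance), $F$ is strictly \emph{increasing} on an initial sub-interval of $[0,\slfrac{1}{n}]$. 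For instance $\cC=\{0000,1100\}$, $g_n=0111$ gives $A(u)=1+u^2$, $B(u)=2u^3$ and $F'(u)=2u(v-3u)>0$ for $u<v/3$, which lies inside the allowed range of $u$. The slip in your ``lowest-order term dominates'' estimate is that the subtracted series $\sum_{i\ge1}(B_i-v\WE_i(\cC))u^i$ has \emph{negative} coefficients $-v\WE_i(\cC)$ for all $1\le i<d$, so its lowest-order term need not be $B_du^d$, and the claimed sign of $\tilde F'$ fails. Monotonicity of the difference is simply the wrong invariant; what is needed (and what the paper establishes) is monotonicity of the two class probabilities separately, equivalently of the ratio $B/A$, which then crosses the threshold $(1-\Delta)v$ at most once. (In the toy example $F$ happens to be unimodal, but your argument does not show this in general.)

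The paper's route is: show that $\sum_i\WE_i(\cC)(p/2)^i(1-p/2)^{2n-i}$ is strictly decreasing for $0\le p\le\slfrac{1}{n}$ --- this needs the comparison $\WE_i(\cC)\le\binom{2n}{i}$ to beat the negative $i=0$ contribution in the derivative --- while the analogous polynomial for the coset has no constant term, so every term of its derivative carries the nonnegative factor $(i-np)$ and it is increasing there; hence the ratio is monotone and at most one crossing occurs. Existence inside the window is then obtained not by evaluating a sign at $p=\slfrac{1}{n}$ but from the lower bound $B(\tilde p)/A(\tilde p)\ge\tilde p^{\,d}$ (translate every codeword by a minimum-weight coset representative), which forces the first crossing to occur by the time $\tilde p^{\,d}$ reaches $(1-\Delta)v$. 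Your endpoint estimate, as set up, does not close under the stated hypothesis: at $p=\slfrac{1}{n}$ your bounds give $vA=\Theta(v)=\Theta(n^{-d})$ while the guaranteed part of the $B$-sum is only about $(4n)^{-d}$ with your $\tilde p/2$ convention (about $(2n)^{-d}$ without it), so the needed inequality is off by a factor of order $4^d$ --- more than bookkeeping with those crude bounds, although in fairness the paper itself is loose about comparable $2^{d}$-type and $(1-\Delta)^{\pm1}$ factors, and the per-codeword mapping bound is the device that makes the existence step go through.
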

\begin{proof}
First, note that for sufficiently small $p$ and for any value of $v$ , $\DQMLD$ will output the identity class, simply because its probability is a polynomial in $p$ with a constant term, and the probabilities of all other logical classes contain no constant term. Furthermore, we claim that the probability of the trivial logical class is a strictly decreasing function of the noise rate $p$, when $0\leq p \leq \slfrac{1}{n}$. To justify this, it suffices to show that the derivative of $\Prob(\mathbb{I}|\vec{0})$ is strictly negative in the prescribed range of noise rates. Recalling the weight enumerator polynomial for $\Prob(\mathbb{I}|\vec{0})$ and computing the derivative w.r.t $p$, we find:
\begin{flalign}
\dfrac{\partial \Prob(\mathbb{I}|\vec{0})}{\partial p} 
&= \dfrac{1}{2}\sum_{i = 0}^{n}\WE_{i}(\cC)\left(\frac{p}{2}\right)^{i}\left(1-\frac{p}{2}\right)^{2n-i}\left[\dfrac{i}{\slfrac{p}{2}} - \dfrac{2n-i}{1-\slfrac{p}{2}}\right] \\
&= \sum_{i=1}^{n}\WE_{i}(\cC)\left(\frac{p}{2}\right)^{i}\left(1-\frac{p}{2}\right)^{2n-i}\left[\dfrac{i-np}{p(1-\slfrac{p}{2})}\right] - n\left(1-\frac{p}{2}\right)^{2n-1} \label{eq:rand}
\end{flalign}
where the last term is added because we have changed the range of summation. It only remains to show that the expression above is strictly negative. Clearly, this cannot be true for all $p \in [0,1]$. However, when $p\leq \slfrac{1}{n}$, we know that the first term in Eq.~({}\ref {eq:rand}{}) is non-negative. Indeed, all of its terms are positive by definition, except the one in square bracket which is non-negative when $p\leq \slfrac{1}{n}$. We claim now that when $p\leq \slfrac{1}{n}$, the second term of Eq.~({}\ref {eq:rand}{}) is negative and greater in norm than the first one, so the entire expression is strictly negative. This can be observed by the following inequality:
\begin{flalign*}
\sum_{i=1}^{n}\WE_{i}(\cC)\left(\frac{p}{2}\right)^{i}\left(1-\frac{p}{2}\right)^{2n-i}&\left[\dfrac{i-np}{p(1-\slfrac{p}{2})}\right] < \sum_{i=1}^{2n}\binom{2n}{i}\left(\frac{p}{2}\right)^{i}\left(1-\frac{p}{2}\right)^{2n-i}\left[\dfrac{i-np}{p(1-\slfrac{p}{2})}\right] \\
&= \sum_{i=0}^{2n}\binom{2n}{i}\left(\frac{p}{2}\right)^{i}\left(1-\frac{p}{2}\right)^{2n-i}\left[\dfrac{i-np}{p(1-\slfrac{p}{2})}\right] + n\left(1-\frac{p}{2}\right)^{2n-1} \\
&= n\left(1-\frac{p}{2}\right)^{2n-1}.
\end{flalign*}
Hence, we see that indeed the probability of the trivial logical class is strictly decreasing when $0\leq p \leq \slfrac{1}{n}$. Since there are only two logical classes in our setting, it is clear that the probability associated to the other class is increasing in that interval. 

We have identified an interval where the probabilities are monotonic, and what remains to be shown is that there is indeed a crossing point inside this interval when the parameter $v$ is chosen carefully. Intuitively, we can see that decreasing the value of $v$ (and hence of the trivial logical class) will decrease the value of $p$ where the first crossing point occurs.  We will now set an upper bound $p_{\max}$ on the value of $p$ where the first crossing point occurs. The first crossing point will occur at the latest when $v\Prob(\mathbb{I}|\vec{0}) - \Prob(\overline{Z}|\vec{0}) = \Delta \Prob(\mathbb{I}|\vec{0})$, so equivalently $v(1-\Delta) = \slfrac{\Prob(\overline{Z} | \vec{0})}{\Prob(\mathbb{I}|\vec{0})}$. On the other hand, the ratio $\slfrac{\Prob(\overline{Z} | \vec{0})}{\Prob(\mathbb{I}|\vec{0})}$ is lower-bounded by $[p/(2-p)]^d$ since each word in $x\in \cC$ is mapped onto a word $y=x+g_{n}$ of weight at most $|x|+|g_{n}|$ in $\cC+g_{n}$. Hence, the first crossing occurs for a value of $p$ lower or equal to the point where
\begin{equation}
v = (1-\Delta)^{-1} \left[\dfrac{p}{2-p}\right]^{d}, \ {\rm or\ equivalently\ } p_{\max} =  \dfrac{2}{1 + (1 - \Delta)^{\frac{1}{d}}\thickspace v^{-\frac{1}{d}}}.
\end{equation}
By choosing 
\begin{gather}
p_{\max} \leq \dfrac{1}{n}, \thickspace  \ {\rm or\ equivalently\ } \thickspace v\leq (1-\Delta)^{-1}\thickspace n^{-d}, \label{eq-max-crossing}
\end{gather}
we are sure that the first crossing point occurs in the monotonic region, and hence that the interval $0\leq p \leq \slfrac{1}{n}$ contains a single crossing point.
\end{proof}

The existence of a unique crossing point in the interval $p\in(0,\slfrac{1}{n}]$  enables a \emph{binary-search like} procedure to narrow down on the possible location of a crossing point. If $\DQMLD$ produces the same output for a pair of $p$'s in that interval, it implies that no such crossing point exists between them, and furthermore, that a crossing point lies outside of this interval. This halves the size of the interval between the next pair of points to be queried with. Hence the location of the crossing point can be obtained to an accuracy of $2^{-j}$ with at most $j$ queries to $\DQMLD$. 

\subsection{Independent constraints and the promise gap} \label{sec-indep-cons}
We have described a polynomial-time method to estimate the location of crossing points within exponential accuracy, i.e. values of $p$ where Eq.~({}\ref {CPv}{}) is fulfilled. It remains to show that a polynomial number of such  linear constraints are sufficient to determine the weight enumerators of the linear code. Every crossing point provides a linear constraint on $\{\WE_{i}(\cC),B_{i}\}_{i=0}^{n}$. However these conditions are inequalities.  In the following lemma, we establish that for $\Delta$ sufficiently small, the system of inequalities provides the same integer solutions as the system of equalities (i.e., obtained with $\Delta = 0$).
\begin{lemma} \label{lemma-gap-size}
Provided that $\Delta \leq 2[2 + n^{\lambda}]^{-1}$, the first $\lambda$ weight enumerator coefficients $\{\WE_{i}(\cC)\}_{i=0}^{\lambda}$ can be extracted efficiently from the value of $2n+1$ crossing points $p_k$ that produce linearly independent inequalities Eq.~({}\ref {CPv}{}). 
\end{lemma}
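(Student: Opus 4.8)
The plan is to convert each of the $2n+1$ crossing points into a nearly-exact linear equation on the coefficient vector $\{\WE_i(\cC),B_i\}_{i=0}^{n}$, solve the resulting linear system, and round to integers. Writing $P(y)=\sum_{i=0}^{n}\WE_i(\cC)\,y^{i}$ and $Q(y)=\sum_{i=0}^{n}B_i\,y^{i}$ with $y=\tilde p/2$, observe first the idealized $\Delta=0$ picture: a crossing obtained with tunable parameter $v_k$ is the point $y_k$ at which the $\DQMLD$ output flips, i.e.\ where $v_kP(y_k)=Q(y_k)$, which is a linear equation in the $2n+2$ unknown coefficients with known coefficients $v_ky_k^{i}$ and $-y_k^{i}$. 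By hypothesis the $2n+1$ crossing points produce linearly independent equations of this form; adjoining the a priori values $\WE_0(\cC)=1$ (the zero codeword) and $B_0=0$ (since $g_n\notin\cC$), the system has a unique solution, the true weight enumerators, and this solution is a vector of nonnegative integers bounded by $2^{k}\binom{n}{i}\le 2^{2n}$.

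Next I would pass from oracle access to this idealized system. By Lemma~(\ref{gap-size-qpoly}) each admissible $v_k$ has a unique crossing point in $(0,\slfrac1n]$, so a binary search that queries $\DQMLD$ at rational noise rates and watches for the output to flip localizes $p_k$ within $2^{-m}$ using $O(m)$ queries; $m$ will be a polynomial in $n$ fixed at the end. Relative to the idealized equality there are two perturbations: the finite accuracy $2^{-m}$, which moves each matrix/right-hand-side entry by at most $2^{-m}\poly(n)$ (the polynomials $P,Q$ and their derivatives are bounded on $[0,\slfrac1n]$); and the nonzero promise gap, which means the flip happens where $|v_kP-Q|\le\Delta v_kP$ rather than exactly at equality, contributing an additive error at most $\Delta\,v_kP(y_k)$ to the $k$-th equation once its right-hand side is normalized to $0$. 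Taking $m$ large enough that the first error is swamped by the second, we obtain a perturbed linear system whose data agree with the exact one to $O(\Delta)$ entrywise.

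Finally I would solve the perturbed system — equivalently, interpolate the rational function $Q/P$ from its approximate values $v_k$ at the nodes $y_k$ — and round. Since the true solution is integer, it suffices to show the perturbed solution lies within $\slfrac12$ of it in the coordinates $\WE_0,\dots,\WE_\lambda$. Because all crossing points are crammed into the short interval $(0,\slfrac1n]$, the associated Vandermonde-type system is ill conditioned and the recovery error depends on the degree of the coefficient extracted: resolving the coefficient of $y^{i}$ only requires separating contributions of size $\gtrsim(4n)^{-i}$, so I expect its error to be bounded by $\Delta\cdot n^{c\,i}$ for an absolute constant $c$. For $i\le\lambda$ this is at most $\Delta\,n^{c\lambda}$, which is $<\slfrac12$ precisely under the hypothesis $\Delta\le 2[2+n^{\lambda}]^{-1}$ (after absorbing $c$ into $\lambda$ and the harmless additive offset), so rounding the perturbed values of $\WE_0,\dots,\WE_\lambda$ returns the exact coefficients. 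The procedure makes $O(n\,m)=\poly(n)$ oracle calls and polynomial post-processing.

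\textbf{Main obstacle.} The delicate step is this graded stability estimate: making rigorous that, from approximate values at $2n+1$ nodes confined to $(0,\slfrac1n]$, the coefficient of $y^{i}$ in a degree-$n$ integer polynomial is recovered with error only $\poly(n)^{i}$ times the data error, and in particular that the $v_k$ realizing the linear-independence hypothesis of Sec.~(\ref{sec-indep-cons}) can simultaneously be chosen to realize it. This is exactly where the shape of the bound $\Delta\le 2[2+n^{\lambda}]^{-1}$ originates; everything else is bookkeeping on accuracies.
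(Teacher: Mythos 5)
You have the right architecture (crossings give linear constraints on $\{\WE_{i}(\cC),B_{i}\}$, fix the scale with one extra normalization, solve, round to integers, and make the per-coefficient error graded in $i$ so that $\Delta \leq 2[2+n^{\lambda}]^{-1}$ suffices for $i\leq\lambda$), and your scale-fixing via $\WE_{0}(\cC)=1$, $B_{0}=0$ plays the same role as the paper's normalization row $\sum_{i}(\WE_{i}+B_{i})=\mathrm{const}$. But the step you flag as the ``main obstacle'' is exactly the content of the lemma, and your proposal neither proves it nor points at a mechanism that would: you first downgrade the promise-gap violation to an ``$O(\Delta)$ entrywise'' perturbation of the linear system and then hope that an ill-conditioned Vandermonde-type analysis yields error $\Delta\, n^{ci}$ on the coefficient of $y^{i}$. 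Treated as a generic entrywise perturbation, the solution error is governed by $\lVert\sfM^{-1}\rVert$, which for $2n+1$ nodes crammed into $(0,\slfrac{1}{n}]$ can be of order $n^{\Theta(n)}$ for every coefficient, not graded in $i$; so the bound you ``expect'' does not follow from the route you sketch, and with it the rounding step and the final condition on $\Delta$ are unsupported.

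The paper closes this gap by exploiting the structure of the perturbation rather than the conditioning of $\sfM$. Each violated promise is $\delta_{k}\Delta$ times the very polynomial $v_{k}\sum_{i}\WE_{i}(\cC)\tilde{p}_{k}^{i}$ appearing in that same row, so the perturbed system is exactly $\left[\left(\mathbb{I}+\Delta\,\cJ\right)\sfM\right]\vec{\omega}=\vec{b}$ with $\lVert\cJ\rVert_{\infty}\leq 1$, i.e.\ a bounded \emph{row rescaling} of $\sfM$, not an arbitrary additive perturbation. Moreover, because the normalization row is the only row with a nonzero right-hand side, only the last column of the inverse contributes, and the unperturbed solution components are the true enumerators themselves, $[\sfM^{-1}\vec{b}\,]_{i}=\WE_{i}(\cC)\leq n^{i}$. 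Hence the deviation is relative: $|\vec{\omega}_{i}-\vec{\omega}_{0,i}|\leq\left(\lVert(\mathbb{I}+\Delta\cJ)^{-1}\rVert_{\infty}-1\right)n^{i}\leq\frac{\Delta}{1-\Delta}\,n^{i}$, which is at most $\slfrac{1}{2}$ for all $i\leq\lambda$ precisely when $\Delta\leq 2[2+n^{\lambda}]^{-1}$, and rounding then succeeds with no appeal to Vandermonde condition numbers. (Your worry about simultaneously achieving linear independence of the constraints is handled separately in Lemma~(\ref{indep-cons}), and the accuracy of locating $p_{k}$ in Lemma~(\ref{crossing-acc}); neither needs to be resolved inside this lemma.) Without an argument of this ``relative perturbation times an $n^{i}$ bound on the coefficient itself'' type, your proof is incomplete at its decisive step.
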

\begin{proof}
Given a crossing point $p_k$, we can rewrite the inequality Eq.~({}\ref {CPv}{}) as an equality
\begin{gather*}
-v_k\thickspace \sum_{i=0}^{n}\WE_{i}(\cC)\tilde{p_k}^{i} + \sum_{i=0}^{n}B_{i}\tilde{p_k}^{i} + \delta_k\Delta\left[v_k\thickspace \sum_{i=0}^{n}\WE_{i}(\cC)\tilde{p_k}^{i}\right] = 0
\end{gather*}
by introducing an unknown parameter $0\leq\delta_k\leq 1$. 
Given $2n+1$ distinct crossing points $p_k$, we obtain a linear system $\sfM\cdot\omega - \Delta\thickspace (\cJ\cdot\sfM)\cdot\omega = \vec{0}$, where $||\cJ||_{\infty} \leq 1$ and $\omega$ is the vector containing the weight enumerator coefficients. Since the location of a crossing point is invariant under multiplying both weight enumerators $\{\WE(\cC)_{i},B_{i}\}_{i=0}^{n}$ by a constant, it is clear that $2n+1$ linear independent constraints alone would result in the trivial solution. However, the normalization condition
\begin{gather}
\sum_{i=0}^{n}(\WE(\cC)_{i} + B_{i}) = 2^{n} \label{norm-WE-coeffs}
\end{gather}
along with the $2n+1$ linearly independent constraints obtained from the location of crossing points is sufficient to determine the weight enumerators coefficients in $\vec{\omega}$, 
\begin{gather}
\vec{\omega} \thickspace = \thickspace \left[\thinspace\left[\mathbb{I} + \Delta\thickspace \cJ\right]\cdot\sfM \thinspace\right]^{-1}\cdot \vec{b} \label{eq-M-inv-gap}
\end{gather}
where we have defined
\begin{gather}
\sfM = \left[\begin{array}{cccccccc}
1 & \tilde{p}_{1} & \dots & \tilde{p}_{1}^{n} & -v_{1} & -v_{1}\tilde{p}_{1} & \dots & -v_{1}\tilde{p}_{1}^{n} \\
  &  &  &  &  &  & \\
1 & \tilde{p}_{2} & \dots & \tilde{p}_{2}^{n} & -v_{2} & -v_{2}\tilde{p}_{2} & \dots & -v_{2}\tilde{p}_{2}^{n} \\
  &  &  &  &  &  & \\
\vdots &  & \ddots & \vdots & \vdots & & \ddots & \vdots \\
  &  &  &  &  &  & \\
1 & \tilde{p}_{2n+1} & \dots & \tilde{p}_{2n+1}^{n} & -v_{2n+1} & -v_{2n+1}\tilde{p}_{2n+1} & \dots & -v_{2n+1}\tilde{p}_{2n+1}^{n} \\
  &  &  &  &  &  & \\
1 & 1 & \dots & 1 & 1 & 1 & \dots & 1 \\
\end{array}\right] \label{constraint-matrix} \\
%\nonumber \\
\vec{b} = \begin{pmatrix}0 & \cdots 0 & 2^{n}\end{pmatrix}^{\sfT} \textsf{ , } \vec{\omega} = \begin{pmatrix} B_{0} & \cdots & B_{n} & \WE_{0}(\cC) & \cdots & \WE_{n}(\cC)\end{pmatrix}^{\sfT} \label{weight-enumerator-coefficients-vector-b} \\
%\nonumber \\
 \cJ = \left[\begin{array}{ccc|ccc}
0 & & & & & \\
& \ddots  & & & {\fontsize{50}{60}\selectfont \textsf{0}} & \\
& & 0 & & & \\
\hline
& & & -\delta_{1} & & \\
& {\fontsize{50}{60}\selectfont \textsf{0}} & & & \ddots & \\
& & & & & -\delta_{n+1} \\
\hline
0 & \dots & 0 & 0 & \dots & 1
\end{array}\right] \text{ with } 0 \leq \delta_{i} \leq 1, \thickspace \forall \thinspace 1\leq i\leq n+1, \label{gap-error} \\
 {\rm and\ }v_{k} = \dfrac{\sum_{i=0}^{n}B_{i}\tilde{p}_{k}^{i}}{\sum_{i=0}^{n}\WE(\cC)_{i}\tilde{p}_{k}^{i}}. \label{vk-WE-ratio}
\end{gather}
 
Denote $\vec\omega_0$ the solution of this linear system in the case where $\Delta = 0$, i.e., $\vec{\omega}_{0} = \sfM^{-1}\cdot \vec{b}$. This solution is not the weight enumerator, because we omitted the $\Delta\cJ$ term. However, since the weight enumerator coefficients are integers, we expect that for $\Delta$ sufficiently small, rounding up each component of  $\vec \omega_0$ to the nearest integer will give the correct answer $\vec \omega$. For this approach to succeed, it suffices to ensure that $|\vec{\omega}_{i} - \vec{\omega}_{0,i}| \leq \slfrac{1}{2}$. Using Eq.~({}\ref {eq-M-inv-gap}{}) for $\vec{\omega}$ and a similar one for $\vec{\omega}_{0}$ (with $\Delta = 0$), we can rewrite this condition in terms of the constraint matrix in Eq.~({}\ref {constraint-matrix}{}) as:
\begin{gather}
|\vec{\omega}_{i} - \vec{\omega}_{0,i}| = \left|\left(\left[\thinspace\left[\mathbb{I} + \Delta\thickspace \cJ\right]\cdot\sfM \thinspace\right]^{-1}_{i,n} - [\thinspace \sfM^{-1}\thinspace]_{i,n}\right)2^{n}\right| \leq \frac 12\label{gap-effect-WE}
\end{gather}
where we have exploited the special structure of $\vec{b}$ which ensures that only the elements in the last column of the inverse matrix have a non-trivial contributions to the components of $\vec{\omega}$.

Using properties of matrix multiplication and the $\infty$-norms of matrices in Eq.~({}\ref {gap-effect-WE}{}), we obtain
\begin{gather}
\left|\thinspace \left[\thinspace\left[\mathbb{I} + \Delta\thickspace \cJ\right]\cdot\sfM \thinspace\right]^{-1} - \sfM^{-1} \thinspace\right|_{i,n} \thickspace \leq \left(\left|\left|\thinspace \left[\mathbb{I} + \Delta\thickspace \cJ\right]^{-1} \thinspace\right|\right|_{\infty} - 1\right) \thickspace \left|\left[\sfM^{-1}\right]_{i,n}\right| .\label{eq-inv-norm}
\end{gather}
The explicit form of the matrix in Eq.~({}\ref {gap-error}{}) can be used to verify that $\left|\left|\thinspace (\thinspace\mathbb{I} + \Delta\thinspace \cJ\thinspace)^{-1} \thinspace\right|\right|_{\infty} \leq (1 - \Delta)^{-1}$ and also $[\sfM^{-1}]_{i,n}\thinspace 2^{n} = \WE_{i}(\cC) \leq n^{i}, \thickspace\forall \thinspace 1\leq i\leq n+1$. Hence we can now state a sufficient condition for Eq.~({}\ref {gap-effect-WE}{}) to hold:
\begin{gather}
\left[\dfrac{1}{1 - \Delta} - 1\right] \thinspace n^{i} \thickspace \leq \thickspace \dfrac{1}{2}, \quad \forall \thinspace 1\leq i\leq n+1\label{gap-error-WEi}
\end{gather}
The lemma follows by imposing this condition for all $i\leq \lambda$, yielding
\begin{gather}
\left[\dfrac{1}{1 - \Delta} - 1\right] \thinspace n^{\lambda} \thickspace \leq \thickspace \dfrac{1}{2} \thickspace \Rightarrow \thickspace \Delta \leq \dfrac{2}{2 + n^{\lambda}}. \label{gap-qpoly}
\end{gather}
\end{proof}

Though it is clear that we must have $2n+1$ linearly independent constraints on $\{\WE_{i}(\cC),B_{i}\}_{i=0}^{n}$ along with a promise gap of $\Delta \leq 2[2 + n^{\lambda}]^{-1}$, it is not immediately clear how many distinct crossing points need to be located to obtain these constraints, i.e. some of the crossing points associated to distinct pairs $(v_l,p_l)$ may result in  linearly dependent constraints Eq.~({}\ref {CPv}{}). The following lemma ensures that we can always efficiently find linearly independent constraints.

\begin{lemma} \label{indep-cons}
The identification of at most $8n^2$ distinct crossing points $(v_l,p_l)$ of Eq.~({}\ref {CPv}{}) will produce $2n+1$ linearly independent constraints.
\end{lemma}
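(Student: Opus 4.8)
The plan is to read each crossing point as a row of the matrix $\sfM$ from Eq.~(\ref{constraint-matrix}) and argue that these rows cannot all be trapped in a low-dimensional subspace. Writing $\tilde p_l = p_l/(2-p_l)$, the $l$-th crossing point contributes the row $\vec r_l = (1,\tilde p_l,\dots,\tilde p_l^{\,n},-v_l,-v_l\tilde p_l,\dots,-v_l\tilde p_l^{\,n})\in\mathbb R^{2n+2}$. Two structural facts get used immediately. First, distinct crossing points give distinct $\tilde p_l$: the map $p\mapsto p/(2-p)$ is injective on $[0,1)$, and along the crossing curve $v$ is a function of $\tilde p$, namely $v=\WE_{\cC+g_n}(\tilde p)/\WE_{\cC}(\tilde p)$ by Eqs.~(\ref{stab-cons-I})--(\ref{stab-cons-Z}). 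Second, Eq.~(\ref{CPv}) with $\Delta=0$ says precisely $\vec r_l\cdot\vec\omega=0$, where $\vec\omega$ is the true coefficient vector of Eq.~(\ref{weight-enumerator-coefficients-vector-b}); hence every crossing row lies in the fixed $(2n+1)$-dimensional hyperplane $\cK:=\{\vec r:\vec r\cdot\vec\omega=0\}$, and the goal becomes to extract a basis of $\cK$ from crossing rows. (The normalization row $(1,\dots,1)$ is not in $\cK$, since the sum of all entries of $\vec\omega$ is $|\cC|+|\cC+g_n|>0$, so once $2n+1$ independent crossing rows are found, adjoining it makes $\sfM$ invertible -- which is all Lemma~(\ref{lemma-gap-size}) requires.)

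The main claim I would then establish is: \emph{every proper subspace $W\subsetneq\cK$ contains at most $2n$ crossing rows.} Pick $\vec\psi=(a_0,\dots,a_n,c_0,\dots,c_n)\in W^{\perp}$ that is not a scalar multiple of $\vec\omega$ -- possible since $\dim W^{\perp}\ge 2$ and $\vec\omega\in W^{\perp}$. With $A(x)=\sum_i a_ix^i$, $C(x)=\sum_i c_ix^i$, $N=\WE_{\cC+g_n}$, $D=\WE_{\cC}$ (each of degree $\le n$, and $D(x)\ge 1$ for $x\ge 0$), substituting $v_l=N(\tilde p_l)/D(\tilde p_l)$ gives $\vec\psi\cdot\vec r_l=[A(\tilde p_l)D(\tilde p_l)-N(\tilde p_l)C(\tilde p_l)]/D(\tilde p_l)$. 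The numerator $f(x)=A(x)D(x)-N(x)C(x)$ has degree at most $2n$; the point is that it is not identically zero, because the only degree-$\le n$ solutions of $AD=NC$ are the scalar multiples of $(N,D)$, and such a pair would force $\vec\psi\propto\vec\omega$. Granting this, $f$ has at most $2n$ roots, so $\{\,l:\vec r_l\in W\,\}\subseteq\{\,l:\vec\psi\cdot\vec r_l=0\,\}$ has at most $2n$ elements.

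With the claim in hand, the counting is routine: scan the given crossing points in any order and keep a row whenever it is independent of those already kept. While the span $V_j$ of the first $j$ kept rows has dimension $j<2n+1$ it is a proper subspace of $\cK$, so at most $2n$ crossing rows lie in it, of which $j$ are already kept; hence at most $2n-j$ further crossing points are discarded before the rank climbs to $j+1$. Summing over $j=0,\dots,2n$ gives a total of at most $\sum_{j=0}^{2n}\bigl[(2n-j)+1\bigr]=(2n+1)(n+1)\le 8n^2$ (valid for all $n\ge 1$) crossing points examined, as claimed; if the process ever stalled before rank $2n+1$ with more than $2n$ rows collected, that would already contradict the claim.

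The one genuinely non-routine ingredient -- and the place I expect the real work -- is the non-vanishing of $f$, equivalently the statement that the crossing rows span the full hyperplane $\cK$ rather than a proper subspace. This is where the construction of Sec.~(\ref{sec-stab-cons}) must be invoked: I would want $\gcd(\WE_{\cC},\WE_{\cC+g_n})=1$ and $\max\{\deg\WE_{\cC},\deg\WE_{\cC+g_n}\}=n$, which one can push through by choosing the complement generators $\{g_i\}_{i>k}$ (in particular $g_n$) so that the all-ones word lies in $\cC$ or in $\cC+g_n$ -- forcing one enumerator to have degree $n$ -- and dispatching the coprimality by a short argument, using if necessary the padding/concatenation tricks already in play. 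Everything else (the Vandermonde-flavoured bookkeeping, the greedy count) is mechanical.
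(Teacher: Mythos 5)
Your overall strategy runs parallel to the paper's: both arguments boil down to the fact that a crossing point which fails to add a new independent constraint must make a certain polynomial of degree at most $2n$ vanish, so at most $2n$ ``bad'' points can occur per rank increment, giving an $O(n^2)$ count. The paper organizes this through the leading minors $\det(\mathsf{T}_l)$ of $\sfM$ and Eq.~({}\ref{indep-det0}{}), while you phrase it as a greedy rank argument inside the hyperplane $\cK = \{\vec r : \vec r\cdot\vec\omega = 0\}$; that repackaging is fine and arguably cleaner.

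However, there is a genuine gap at exactly the step you flag as ``the real work,'' and it cannot be dispatched by the fixes you sketch. Your main claim needs $f = AD - NC \not\equiv 0$ for every $\vec\psi$ not proportional to $\vec\omega$, which is equivalent to requiring both $\max\{\deg \WE_{\cC},\deg \WE_{\cC+g_n}\} = n$ and $\gcd(\WE_{\cC},\WE_{\cC+g_n}) = 1$ (as polynomials). The degree condition can indeed be arranged via the all-ones word, but the coprimality condition can genuinely fail even then: take $\cC$ the length-$6$ repetition code and $g_n$ of weight $2$, so $\WE_{\cC}(x) = 1+x^6$ and the coset enumerator is $x^2+x^4$, sharing the factor $1+x^2$. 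In such a case the crossing rows all lie in a subspace of $\cK$ of codimension at least $2$, so \emph{no} number of crossing points produces $2n+1$ independent constraints, and your greedy procedure stalls; choosing $g_n$ cleverly or padding does not obviously remove the common factor, and you give no argument that it can. The paper's proof does not assume coprimality: when the rational equation determining the ``bad'' values of $v_l$ coincides identically with Eq.~({}\ref{vk-WE-ratio}{}) — which is precisely your degenerate situation — it observes that the ratio $\slfrac{\sum_i\WE_i(\cC)\tilde p^i}{\sum_i B_i\tilde p^i}$ then equals an already-known rational function (built from previously located crossing points), so the weight enumerators are fixed directly and the reduction succeeds without ever harvesting $2n+1$ independent constraints. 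Your write-up needs either a proof of the coprimality/spanning claim (which, per the example above, is false in general) or an analogous fallback mechanism for the degenerate case; as it stands, the central claim is unproved and the counting built on it does not go through.
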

\begin{proof}
The $(2n+2)\times (2n+2)$ matrix in Eq.~({}\ref {constraint-matrix}{}) can be represented as $\sfM = [\sfV\thinspace |\thinspace \sfD\cdot\sfV]$ where $\sfV_{i,j} = \tilde{p}_{i}^{j}$ is a $(2n+2) \times (n+1)$ \emph{Vandermonde matrix} and $\sfD_{i,j} = -v_{i}\delta_{i,j}$ is a diagonal square matrix. We shall prescribe a polynomial-time method to construct a full-rank matrix of this form.

Let $\mathsf{T}_{l}$ represent the sub-matrix of $\sfM$ formed by taking the first $l$ column entries of the first $l$ rows. Hence $\mathsf{T}_{l}$ is a square matrix. Moreover, notice that a sufficient condition for the the matrix in Eq.~({}\ref {constraint-matrix}{}) to describe $l$ independent constraints is that $\mathsf{T}_{l}$ must have a non-vanishing determinant. Hence, it suffices to show that each new constraint can be generated such that the corresponding square matrix $\mathsf{T}_{l}$ satisfy $\det(\mathsf{T}_{l}) \neq 0$.

Consider the simple case when $l \leq n+1$. In this case, $\mathsf{T}_{l}$ is always given by a \emph{Vandermonde matrix}, which is full rank if and only if all $p_{i}$ are distinct \cite{mat-book-uw-08}. Hence, this implies that $n+1$ distinct values of $v_{i}$ suffice to generate $n+1$ constraints that are necessarily independent, since different $v$'s necessarily produce different $p$'s (see Lemma.~({}\ref {crossing-acc}{})).

For $l > n+1$, it is not immediately clear if the procedure prescribed in Sec.~({}\ref {sec-stab-cons}{}) will  provide linearly independent constraints even if we choose all $v_{i}$ to be distinct. However, a sufficient condition for independency is again the non-vanishing of $\det(\mathsf{T}_{l})$. Assume that the choices of $\{v_{1}, \dots, v_{l-1}\}$ are such that $\det(\mathsf{T}_{l-1}) \neq 0$. The Laplace-expansion of the determinant suggests that $\det(\mathsf{T}_{l}) = 0$ when $v_{l}$ is given by:
\begin{gather}
v_{l} = \dfrac{\sum_{i=0}^{n}E_{i}\tilde{p}_{l}^{i}}{\tilde{p}_{n+t}^{t-1}\det(T_{l-1}) + \sum_{j=0}^{t-2}\tilde{p}_{n+j}^{j}C_{j}} \label{indep-det0}
\end{gather}
where $l > n + 1$ and $E_{i}, C_{i}$ are constants (independent of $p_{n+i}$; they are minors of $p^{i}$ and $vp^{i}$ respectively). 

The above equation classifies the particular values of $v_{l}$ that fail to generate $l$ linearly independent constraints on $\{\WE_{i}(\cC), B_{i}\}_{i=0}^{n}$. Note that $(v_{l},p_{l})$ are related by Eq.~({}\ref {vk-WE-ratio}{}). Hence, equating this with Eq.~({}\ref {indep-det0}{}) results in a rational polynomial in $p_{l}$, which has at most $2n$ roots, unless the expressions in Eqs.\nobreakspace \textup {(\ref {vk-WE-ratio})} and\nobreakspace  \textup {(\ref {indep-det0})} are identical. Assume for now that these expressions are not identical. Then we can choose to sample $2n+1$ different values of $v_{l}$ according to Eq.~({}\ref {vk-WE-ratio}{}), and for at least one of them, the resulting $p_{l}$ will violate Eq.~({}\ref {indep-det0}{}), so the pair $(v_{l},p_{l})$ will produces a linearly independent constraint. On the other hand, if the two functions in Eqs.\nobreakspace \textup {(\ref {vk-WE-ratio})} and\nobreakspace  \textup {(\ref {indep-det0})} are identical, then we have that $\forall p$:
\begin{gather}
\dfrac{\sum_{i=0}^{n}\WE_{i}(\cC)\tilde{p}^{i}}{\sum_{i=0}^{n}B_{i}\tilde{p}^{i}} = \dfrac{\sum_{i=0}^{n}E_{i}\tilde{p}^{i}}{\tilde{p}^{k-1}\det(T_{n+k-1}) + \sum_{j=0}^{k-2}\tilde{p}^{j}C_{j}}. \label{indep-fix-WE}
\end{gather}
The above relation immediately fixes the coefficients $\{\WE_{i}(\cC), B_{i}\}_{i=0}^{n}$ by just comparing the powers of $\tilde{p}$. 

Since we must sample at most $2n+1$ crossing points to obtain a linearly independent constraint on $\{\WE_{i}(\cC), B_{i}\}_{i=0}^{n}$, we require at most $(2n+1)(2n+1)$ or (for expressional convenience) $8n^{2}$ distinct crossing points to construct the a full rank matrix of the form in Eq.~({}\ref {constraint-matrix}{}).
\end{proof}

The last ingredient we need is a bound on the distance between crossing points. Remember that we are only able to locate the value of a crossing probability $p_k$ to exponential accuracy. Thus, it is necessary that changing the tunable parameter $v$ has a significant effect on the value of the crossing point in order to generate linearly independent constraints (with significantly different values of $p_k$). Combining the restriction on the values of the tunable parameter in Eq.~({}\ref {eq-max-crossing}{}) with Lemma.~({}\ref {indep-cons}{}) immediately tells that the smallest change in the tunable parameter will be at least $(1-\Delta)n^{-n}8^{-1}n^{-2}$. This naturally implies a minimum separation between two crossing points, as the lemma below addresses:

\begin{lemma} \label{crossing-acc}
Let $\{v_k,p_k\}_{k=1}^{2n+1}$ denote crossing points of a stabilizer code constructed as above with $|v_k-v_l|\geq [8n^{2}(1-\Delta)n^{n}]^{-1}$. Then $|p_{i} - p_{j}| \geq 4^{-n\log_{2}n}$.
\end{lemma}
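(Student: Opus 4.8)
The plan is to bound the derivative $\partial p/\partial v$ from below (equivalently, bound $\partial v/\partial p$ from above) along the crossing‑point curve, and then conclude via the mean value theorem: if $v$ must change by at least $[8n^2(1-\Delta)n^n]^{-1}$ to produce a genuinely new (independent) constraint, and the crossing point $p$ is a Lipschitz function of $v$ with a controlled Lipschitz constant, then the corresponding $p$‑values must be separated by at least the stated amount. So the whole argument reduces to a quantitative implicit‑function estimate on the curve defined by the crossing condition $v(1-\Delta)\sum_i \WE_i(\cC)\tilde p^i = \sum_i B_i \tilde p^i$ (the equality version of Eq.~(\ref{CPv})).

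First I would write $v$ explicitly as a function of $p$ along the crossing locus: from Eq.~(\ref{vk-WE-ratio}), $v = (1-\Delta)^{-1} R(\tilde p)$ where $R(\tilde p) = \big(\sum_{i=0}^n B_i \tilde p^i\big)\big/\big(\sum_{i=0}^n \WE_i(\cC)\tilde p^i\big)$ and $\tilde p = p/(2-p)$. This is a ratio of two integer‑coefficient polynomials of degree at most $n$, each with nonnegative coefficients bounded by $\binom{2n}{i}$, and with $\WE_0(\cC)=1$, so the denominator is at least $1$ on the relevant interval. I would then bound $|dv/dp| = (1-\Delta)^{-1}|R'(\tilde p)|\,|d\tilde p/dp|$. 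On $0\le p\le 1/n$ we have $\tilde p \le 1/(2n-1)$, $d\tilde p/dp = 2/(2-p)^2 \le 1$, and $R'(\tilde p)$ is a ratio whose numerator and denominator are again polynomials with coefficients bounded in absolute value by a crude quantity like $2^{4n}$; since the denominator of $R'$ is $\big(\sum_i \WE_i(\cC)\tilde p^i\big)^2 \ge 1$, we get a bound $|R'(\tilde p)| \le 2^{c n}$ for some small constant $c$. Hence $|dv/dp| \le 2^{cn}$ on the interval of interest (absorbing $(1-\Delta)^{-1}\le 2$ into the constant), i.e. $v$ is Lipschitz in $p$ with constant at most, say, $2^{4n}$ — certainly at most $4^{n\log_2 n}\cdot 8n^2(1-\Delta)n^n$ for $n$ large, which is exactly what is needed.

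Finally, I would assemble the pieces: given two crossing points $(v_i,p_i)$ and $(v_j,p_j)$ with $|v_i-v_j|\ge [8n^2(1-\Delta)n^n]^{-1}$, the mean value theorem applied to $v$ as a function of $p$ gives $|v_i-v_j| \le \big(\sup |dv/dp|\big)\,|p_i-p_j|$, whence $|p_i-p_j| \ge |v_i-v_j| / \sup|dv/dp| \ge [8n^2(1-\Delta)n^n]^{-1} \cdot 2^{-4n} \ge 4^{-n\log_2 n}$, the last inequality holding because $8n^2(1-\Delta)n^n 2^{4n} = 2^{O(n\log n)} \le 2^{2n\log_2 n} = 4^{n\log_2 n}$ for $n$ sufficiently large (and the small cases can be absorbed into constants or handled directly). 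The main obstacle I anticipate is getting an honest, not merely heuristic, upper bound on $|R'(\tilde p)|$ — one has to be careful that the denominator polynomial $\sum_i \WE_i(\cC)\tilde p^i$ does not get small on the interval (it does not, since its constant term is $1$ and all coefficients are nonnegative), and one has to track how the crude coefficient bounds $\WE_i(\cC)\le \binom{2n}{i}\le 2^{2n}$ and $B_i\le 2^{2n}$ propagate through the quotient rule, but none of this is deep; it is just bookkeeping to confirm the exponent stays $O(n\log n)$ rather than something larger. A secondary subtlety is making sure the argument uses the crossing locus (where $v$ and $p$ are functionally tied by $R$) rather than treating $v$ and $p$ as independent, which is why I phrase everything in terms of the single‑variable function $v = (1-\Delta)^{-1}R(\tilde p(p))$.
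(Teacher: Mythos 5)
Your proposal is correct and follows essentially the same route as the paper: the paper likewise treats $v$ as a function of $p$ through the ratio $v=\bigl(\sum_i B_i\tilde p^{\,i}\bigr)/\bigl(\sum_i \WE_i(\cC)\tilde p^{\,i}\bigr)$, bounds the derivative (obtaining $(\slfrac{dv}{dp})^{-1}\geq (1-p)^{2n+2}/n \geq 4^{1-n}/n$ on $p\leq\slfrac{1}{n}$), and converts the minimum spacing in $v$ into the claimed $4^{-n\log_2 n}$ spacing in $p$. Your cruder $2^{\bigO{n}}$ Lipschitz bound via coefficient estimates loses only factors that are negligible against $4^{-n\log_2 n}$, so it yields the same conclusion.
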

\begin{proof}
We need to relate the difference in $v$ to a difference in $p$. For this, let us recall that if $\gamma$ is a small change in $v$ and $\delta$ a small change in $p$, then $\gamma = \delta \thinspace (\slfrac{dv}{dp})$, or equivalently: $\gamma (\slfrac{dv}{dp})^{-1} = \delta$. Computing the derivative using the expression for $v$ in Eq.~({}\ref {vk-WE-ratio}{}), we have:
{\small
\begin{flalign}
\delta &= \gamma\cdot \left(\dfrac{\sum_{i,j=0}^{n}\WE_{i}(\cC)\WE_{j}(\cC)\left(\dfrac{p}{1-p}\right)^{i+j}}{\sum_{i,j=0}^{n}B_{i}\WE_{j}(\cC)\left[\left(\dfrac{p}{1-p}\right)^{j}\dfrac{d}{dp}\left(\dfrac{p}{1-p}\right)^{i} - \left(\dfrac{p}{1-p}\right)^{i}\dfrac{d}{dp}\left(\dfrac{p}{1-p}\right)^{j}\right]}\right) \nonumber \\
&\geq \gamma \cdot \left(\dfrac{1}{\dfrac{1}{(1-p)^{2}}\sum_{i,j=0}^{n}\left[iB_{i}\WE_{j}(\cC)\left(\dfrac{p}{1-p}\right)^{j}\left(\dfrac{p}{1-p}\right)^{i}\right]}\right) \nonumber \\
&\geq \gamma \cdot \dfrac{(1-p)^{2n+2}}{n} \thickspace \Rightarrow \thickspace \delta \geq \gamma \cdot \dfrac{4^{1-n}}{n}
\end{flalign}
}
Since $\gamma \geq (1-\Delta)n^{-n}8^{-1}n^{-2}$, we find: $\delta \geq (1-\Delta)n^{-n}8^{-1}n^{-3}4^{1-n} \geq 4^{-n\log_{2}n}$.
\end{proof}
Hence it suffices to estimate the location of each crossing point to within an accuracy of $4^{-n\log_{2}n}$, implying that we must run at most $2n\log_{2}n$ iterations of the above binary-search like procedure, mentioned at the end of Sec.~({}\ref {sec-crossing}{}), to locate the crossing point. Assuming that each query to a $\DQMLD$ oracle is answered in constant time, the above procedure takes $\cO(n\log_{2}n)$ time.

In appendix App.~({}\ref {app-concatenate}{}) we construct a repetition code with polynomially many qubits to encode a single qubit, such that the resulting noise on the encoding qubit obeys $p_Z = q$, $p_I = 1-q$, and $p_Z=p_Y =0$ with an exponential accuracy (the error-free model is a special case $q=0$). This implies that the tunable parameters $v_k$ used to find crossing points can be set to exponential accuracy. 

\vspace{0.3cm}

\noindent With these, we are in a position to prove our main result. 

\vspace{-0.2cm}

\begin{proofMainThm}
Given an input classical linear code $\cC$, we built a generating set for a stabilizer code, with two logicals, that satisfies Eq.~({}\ref {prob-I-WE}{}). Appending additional qubits to this code and choosing a specific channel Sec.~({}\ref {sec-stab-cons}{}), on those qubits, enabled the introduction of a tunable parameter in the logical class probabilities Eqs.\nobreakspace \textup {(\ref {stab-cons-I})} and\nobreakspace  \textup {(\ref {stab-cons-Z})}. Varying this tunable parameter as per Lemma.~({}\ref {gap-size-qpoly}{}), with a polynomial number of queries to a $\DQMLD$ oracle, we can estimate the location of a crossing point to within exponential accuracy, thereby providing a linear constraint on the weight enumerator coefficients. Repeating this procedure $2n+1$ times, we showed in Lemma.~({}\ref {lemma-gap-size}{}) that, as long as the promise gap is $2[2+n^{\lambda}]^{-1}$, the system of inequalities yield the same solution to the first $\lambda$ weight enumerator coefficients of $\cC$, up to an integer approximation, as the system of equalities considered without any promise gap. Since $\lambda = \textsf{polylog}(n)$ in Thm.~({}\ref {main-theorem}{}), it suffices to have a promise gap in $\DQMLD$ that is $1/\textsf{quasi-polynomial(n)}$. It followed from Lemma.~({}\ref {indep-cons}{}) that at most $8n^{2}$ crossing points are sufficient to generate the necessary linearly independent constraints on the weight enumerator coefficients and from Lemma.~({}\ref {crossing-acc}{}) that the accuracy on their location, achievable in polynomial time, is sufficient. Lastly, we showed in App.~({}\ref {app-concatenate}{}) that it suffices to direct all queries to a $\DQMLD$ oracle on a independent $X-Z$ channel. Thus we prove Thm.~({}\ref {main-theorem}{}).
\end{proofMainThm}

\section{Conclusion}
We will close with mentioning a few open problems which we were not able to address in this paper. In the course of this paper we have addressed the optimal decoding problem on an independent $X-Z$ channel. However, the same can be done for a depolarizing channel by introducing the notion of a generalized weight described in \cite{yueh-13}. Hence, Sec.~({}\ref {sec-stab-cons}{}) of the paper will undergo certain modifications when choosing to address a depolarizing channel.

The key problem turns out to be the classification of the \emph{parametrized} complexity of the decoding with the promise gap parameter Def.~({}\ref {def-DQMLD}{}), denoted by $\Delta$. We know the complexities for two extreme cases, namely $\DQMLD$ is $\sharpPComplete$ when $\Delta = \slfrac{1}{\textsf{quasi-polynomial}(n)}$, c.f. Thm.~({}\ref {main-theorem}{}) and in $\NP$ when $\Delta = 1 - 2^{-n-k}$, c.f. Lemma.~({}\ref {lemma-large-gap}{}). However, for a vast intermediate range of $\Delta$, complexity of $\DQMLD$ remains open. As described in Sec.~({}\ref {sec-deg-dec}{}), for a code encoding a single qubit, the promise gap $\Delta$ is related to the decoding failure probability $\epsilon$ as $\Delta = 1-\epsilon$. Thus, the two extreme cases considered above correspond respectively to optimal decoding in a very noisy regime (failure probability approaching unity) and optimal decoding with an exponentially small failure probability. Unfortunately, the case of practical interest falls somewhere in between. 

The complexity of any problem is only a highlight of the runtime of any algorithm on the worst case instance of the problem. Hence it is of practical interest to know the runtime of the algorithm for any \emph{typical} instance. This could refer for instance to a typical syndrome or a random code. It is well known however that random codes are non-degenerate \cite{got-phd-97,ashik00}, so the decoding is not expected to be affected by the degeneracy in errors, so our result is probably not relevant in this setting. However, for the practically relevant class of \emph{sparse codes}, the complexity of optimal decoding strategy remains an important open question.

Lastly, our analysis has focused on stabilizer codes over Pauli channels. This is particularly convenient due to the discrete nature of the resulting decoding problem. This setting could be generalized in two obvious ways. First, we could consider codes that are not stabilizer codes, defined from a set of commuting projectors. There exists a growing interest for those codes, particularly in the setting of topological quantum order \cite{Kit03a,HP10a,LP13a,BD12a}. In this setting, we could study for instance the decoding problem for systems that support non-Abelian anyons \cite{BBDFP13}.  Second, we could consider errors that are not described by Pauli operators. This problem is of practical importance because no real-world device undergoes a Pauli channel; for instance the physical process of relaxation is not described by a Pauli channel. 

\section{Acknowledgements}
We thank Daniel Gottesman for stimulating discussions and Guillaume Duclos-Cianci for careful reading of this manuscript. This work was partly funded by Canada's NSERC and by the Lockheed Martin Corporation.

\clearpage

%formatting reasons:
%\bibliography{sharp-dqmld-tqc2}
%\input{refs-final-hardness.tex}

\clearpage

\appendix
\section{Simulating Pauli channels with an independent $X-Z$ channel} \label{app-concatenate}

In our reduction, we have used two features of the a general memoryless Pauli channel, that are not intrinsic to an independent $X-Z$ channel. They involve the freedom of assigning unequal noise rates for $\mathbb{I}, X,Y$ and $Z$ type errors on some qubits of the code. However, when a qubit of the code is encoded into an auxiliary code, the the optimal decoder on this concatenated code will replace the physical noise rates for the qubit with the logical noise rates (logical class probabilities) of the auxiliary code \cite{poulin-06}. Therefore, the physical noise rates for that qubit can be controlled by controlling the logical class probabilities of the auxiliary code. The latter can be achieved by varying the syndrome on the auxiliary code.

In our case, the auxiliary code is the Shor code \cite{Sho95a,preskill-13} on a $n_{1} n_{2}$ qubits, with $n_{1}, n_{2} \sim \poly(n)$. The stabilizers of this code can be represented graphically on a $n_{1}\times n_{2}$ lattice with a qubits at each vertex, see Fig.~({}\ref {sim-shor-code}{}). Each horizontal link between a pair of vertices represents a $X-$type stabilizer on the qubits corresponding to the vertices. A pair of rows represents a $Z-$type stabilizer on the qubits corresponding to the vertices on those rows.
\begin{figure}[H]
\centering
\subfloat[Stabilizers and logicals of the Shor code on a $n_{1}\times n_{2}$ lattice.]{\label{sim-shor-code}\includegraphics[scale=0.26]{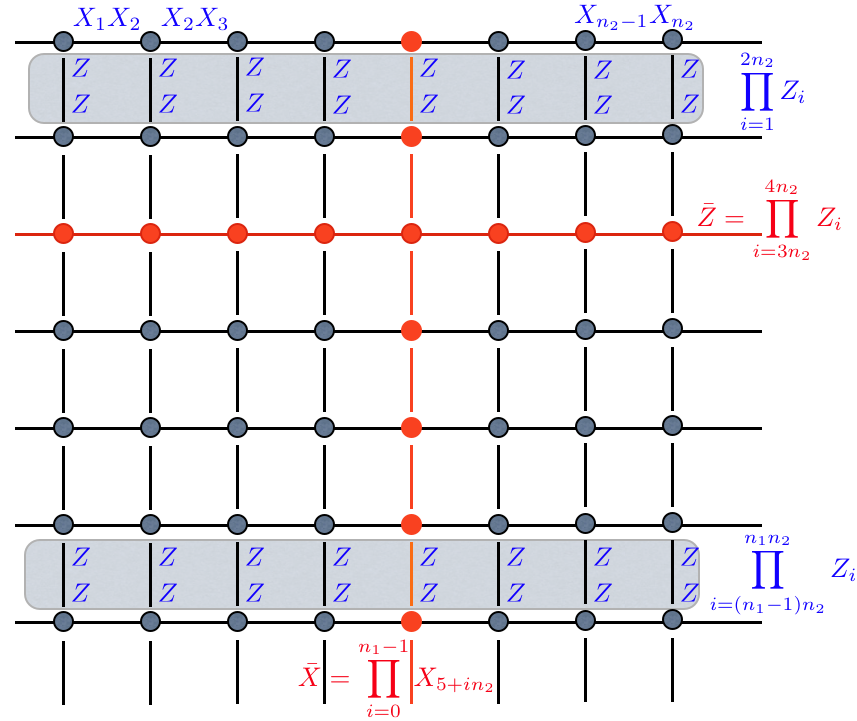}}
\hspace{0.05cm}
%\vrule
\hspace{0.05cm}
\subfloat[Syndrome corresponding to a $Z-$type error on qubit along a row.]{\label{sim-shor-errors}\includegraphics[scale=0.27]{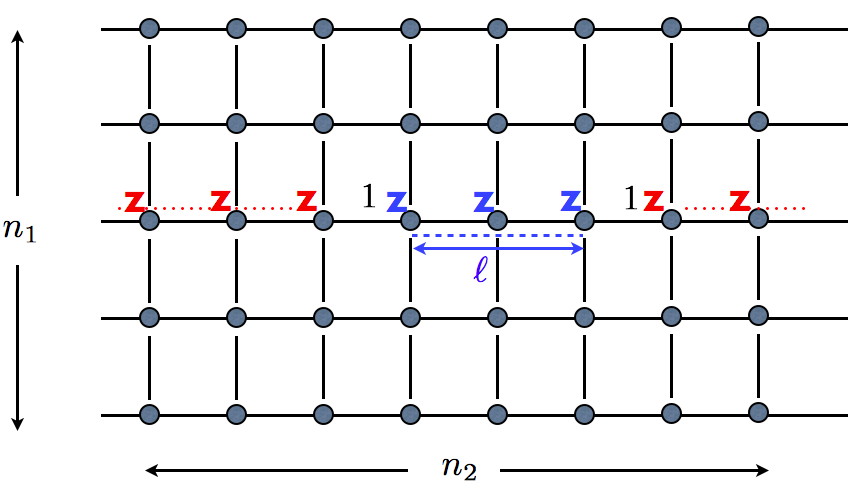}}
\caption{Shor code on a $n_{1}\times n_{2}$ lattice.}
\label{fig-toric-code-errors}
\end{figure}

Any $Z-$type error chain will turn on a syndrome, represented by a pair of points on two ends of the chain, c.f. Fig.~({}\ref {sim-shor-errors}{}). Let us denote the smallest number of links between the points as $\ell$. We will restrict to the case where there is a single continuous error chain, which is confined to a row of the lattice. Hence $\ell$ completely defines the syndrome. For the syndrome in (Fig. \ref{sim-shor-errors}), one can immediately write the probabilities for the four logical classes $\overline{\mathbb{I}}$, $\overline{X}$, $\overline{Z}$ and $\overline{Y}$ as:
\begin{gather}
\Prob(\overline{\mathbb{I}}|\vec{s}) = \dfrac{\left[\dfrac{p}{2}\left(1 - \dfrac{p}{2}\right)\right]^{\ell}\left[\left(1-\dfrac{p}{2}\right)\left(1-\dfrac{p}{2}\right)\right]^{n_{2}-\ell}}{\tilde{P}(s)} \label{shor-prob-I} \\
\Prob(\overline{Z}|\vec{s}) = \dfrac{\left[\dfrac{p}{2}\left(1 - \dfrac{p}{2}\right)\right]^{n_{2}-\ell}\left[\left(1-\dfrac{p}{2}\right)\left(1-\dfrac{p}{2}\right)\right]^{\ell}}{\tilde{P}(s)} \label{shor-prob-Z} \\
\Prob(\overline{X}|\vec{s}) = \dfrac{\left[\dfrac{p}{2}\left(1 - \dfrac{p}{2}\right)\right]^{\ell}\left[\left(1-\dfrac{p}{2}\right)\left(1-\dfrac{p}{2}\right)\right]^{n_{2}-\ell}\sum_{i=1}^{n_{2}}\binom{n_{2}}{i}\left(\dfrac{p}{2}\right)^{n_{1} i}\left(1-\dfrac{p}{2}\right)^{2n_{1}n_{2} - n_{1}i}}{\tilde{P}(s)} \label{shor-prob-X} \\
\Prob(\overline{Y}|\vec{s}) = \dfrac{\left[\dfrac{p}{2}\left(1 - \dfrac{p}{2}\right)\right]^{n_{2}-\ell}\left[\left(1-\dfrac{p}{2}\right)\left(1-\dfrac{p}{2}\right)\right]^{\ell}\sum_{i=1}^{n_{2}}\binom{n_{2}}{i}\left(\dfrac{p}{2}\right)^{n_{1} i}\left(1-\dfrac{p}{2}\right)^{2n_{1}n_{2} - n_{1}i}}{\tilde{P}(s)}  \label{shor-prob-Y}\\
\text{where: } \tilde{P}(\vec{s}) = \left[\left(\dfrac{p}{2}\right)^{\ell}\left(1-\dfrac{p}{2}\right)^{2n_{2}-\ell} + \left(\dfrac{p}{2}\right)^{n_{2}-\ell}\left(1-\dfrac{p}{2}\right)^{2\ell}\right]\sum_{i=0}^{n_{2}}\binom{n_{2}}{i}\left(\dfrac{p}{2}\right)^{n_{1} i}\left(1-\dfrac{p}{2}\right)^{2n_{1}n_{2} - n_{1}i} \label{shor-prob-synd}
\end{gather}

Given a constant $v$, we wish to simulate a channel, on the encoded qubit, with $p_{I} = 1-q, \thinspace p_{Z} = q$ such that $\slfrac{(1-q)}{q} = v$. This implies a ratio between $\Prob(\overline{\mathbb{I}}|\vec{s})$ and $\Prob(\overline{Z}|\vec{s})$ equal to $v$, and $\ell$ given by
\begin{gather}
v := \dfrac{\Prob(\thinspace\overline{\mathbb{I}}\thinspace|\thinspace s\thinspace)}{\Prob(\thinspace\overline{Z}\thinspace|\thinspace s\thinspace)} = \left[\dfrac{p}{2(1-p)}\right]^{2\ell}\thinspace \left[\dfrac{2(1-p)}{p}\right]^{n_{2}} \thickspace \Rightarrow \thickspace \ell = \dfrac{1}{2}\left(n_{2} + \dfrac{\left|\log_{2}v\right|}{1 + \log_{2}\frac{1-p}{p}}\right) \label{shor-synd}
\end{gather}
where in the last simplification, we have assumed $v\leq 1$. As a consequence of the upper bound on $v$ in Eq.~({}\ref {eq-max-crossing}{}) it suffices to choose $n_{2} = 2n$. We will fix the number of columns in the auxiliary code Fig.~({}\ref {sim-shor-code}{}) to $2n$.

In addition to the ratio between $p_{\mathbb{I}}$ and $p_{Z}$, the channel on the encoded qubit also requires that $p_{X} = p_{Y} = 0$. To achieve this, we must choose a value of $n_{1}$ such that the expressions in Eqs.\nobreakspace \textup {(\ref {shor-prob-X})} and\nobreakspace  \textup {(\ref {shor-prob-Y})} are vanishingly small. Note that the ratio of the combinatorial sums in the the expressions for $\Prob(\overline{Y}|\vec{s})$ and $\tilde{P}(s)$ can be bounded from above as:
\begin{flalign}
\dfrac{\sum_{i=1}^{n_{2}}\binom{n_{2}}{i}\left(\dfrac{p}{2}\right)^{n_{1} i}\left(1-\dfrac{p}{2}\right)^{2n_{1}n_{2} - n_{1}i}}{\sum_{i=0}^{n_{2}}\binom{n_{2}}{i}\left(\dfrac{p}{2}\right)^{n_{1} i}\left(1-\dfrac{p}{2}\right)^{2n_{1}n_{2} - n_{1}i}} &\leq \left(\frac{p}{2-p}\right)^{n_{1}} + \dfrac{n_{2}^{2}\left(\frac{p}{2-p}\right)^{2n_{1}}}{1 - n_{2}\left(\frac{p}{2-p}\right)^{n_{1}}} \\
&\leq (2n-1)^{-n_{1}} + \dfrac{n^{2}_{2}(2n-1)^{-2n_{1}}}{1 - n_{2}(2n-1)^{-n_{1}}}
\label{shor-ZY-sums}
\end{flalign}
Hence it suffices to take $n_{1} = 2n$ for the above ratio to be bounded above by a vanishingly small number. Moreover, as the ratio is an upper bound for $\Prob(\overline{X}|\vec{s})$ and $\Prob(\overline{Y}|\vec{s})$, we will fix the number of rows in the auxiliary code Fig.~({}\ref {sim-shor-code}{}) to $2n$.

Summarizing, the qubit is encoded into a Shor code on a $2n\times 2n$ lattice Fig.~({}\ref {fig-toric-code-errors}{}) and the syndrome is chosen as indicated in Fig.~({}\ref {sim-shor-errors}{}) with $\ell$ specified by Eq.~({}\ref {shor-synd}{}). As a result we have a channel on the qubit with $p_{\mathbb{I}} = 1-q, \thinspace p_{Z} = q, \thinspace p_{X} = p_{Y} = 0$, to within exponential error bars. Note that an error free channel is a special case of the above channel, with $q = 0$. Hence, we repeat the same encoding but choose the syndrome on the Shor code to be trivial.

\section{Proof of Lemma.~({}\ref {lemma-large-gap}{})} \label{app-large-gap}
In this appendix, we present a proof of Lemma.~({}\ref {lemma-large-gap}{}) which states that with a promise gap $\Delta \geq 1-2^{-n-k}$ and on an independent $X-Z$ channel, the outputs of $\QMLD$ and $\DQMLD$ are equivalent.

It suffices to demonstrate that, for any syndrome $\vec{s}$, the logical class containing the minimum weight error $L_{\min}$ must satisfy Eq.~({}\ref {eq-DQMLD-gap}{}) with $L^{\star} = L_{\min}$. This will be true whenever $1 - \Prob(L_{\min}|\vec{s}) \leq \Delta = 1 - 2^{-n-k}$, or in other words 
\begin{gather}
\dfrac{\Prob(L_{\textsf{min}} , \vec{s})}{\Prob(\vec{s})} \geq 2^{-n-k}. \label{app-gap-cond}
\end{gather}
Suppose that $a(\vec{s})$ is the minimum weight of an error, consistent with the syndrome $\vec{s}$. Let us derive lower and upper bounds separately for the numerator and the denominator of Eq.~({}\ref {app-gap-cond}{}), respectively. Every element in the logical class of $L_{\textsf{min}}$ is constructed by taking the product of the minimum weight error with every stabilizer in $\cS$. To find the lowest possible value of the probability of the product, we suppose that each product results in an element whose weight is the sum of $a(\vec{s})$ and that of the corresponding stabilizer. Hence we have:
\begin{gather}
\Prob(L_{\textsf{min}},\vec{s}) \geq \left[1 - \dfrac{p}{2}\right]^{2n}\sum_{i = 0}^{n}B_{i}\tilde{p}^{i + a(\vec{s})} \label{app-gap-Lmin}
\end{gather}
where $\tilde{p} = \slfrac{p}{(2-p)}$ and $B_{i} = |\{g\in\cS : \wtt(g) = i\}|$ is a weight enumerator coefficient of $\cS$.

Similarly, all the errors consistent with the syndrome $\vec{s}$ can be constructed by taking the error of weight $a(\vec{s})$ and taking its product with every element of the normalizer $\cN(\cS)$. For an upper bound to the probability of this product, let us suppose that each product has a weight equal to the difference between $a(\vec{s})$ and the weight of the corresponding stabilizer. As the weight of the product cannot be lower than $a(\vec{s})$ itself, this gives:
\begin{gather}
\Prob(\vec{s}) \leq \left[1 - \dfrac{p}{2}\right]^{2n}\left[\sum_{i=0}^{2a(\vec{s})}B^{\perp}_{i}\tilde{p}^{a(\vec{s})} + \sum_{i=2a(s)+1}^{n}B^{\perp}_{i}\tilde{p}^{i - a(\vec{s})}\right] \label{app-gap-synd}
\end{gather}
where $B^{\perp}_{i} = |\{g\in\cN(\cS) : \wtt(g) = i\}|$ is a weight enumerator coefficient of $\cN(\cS)$ \cite{got-phd-97}.

The ratio of expressions in Eqs.\nobreakspace \textup {(\ref {app-gap-Lmin})} and\nobreakspace  \textup {(\ref {app-gap-synd})}, bounds the quantity in Eq.~({}\ref {app-gap-cond}{}) from above, by:
%\DP{There is something missing in the numerator... add more steps to the derivations if necessary}
\begin{gather}
\dfrac{\Prob(L_{\textsf{min}} , \vec{s})}{\Prob(\vec{s})} \geq \dfrac{1}{\sum_{i=0}^{2a(\vec{s})}B^{\perp}_{i} + \tilde{p}\sum_{i=2a(\vec{s})+1}^{n}B^{\perp}_{i}} \geq \dfrac{1}{\sum_{i=0}^{n}B^{\perp}_{i}} = 2^{-n-k}
\end{gather}
thereby proving the lemma. \hfill $\square$
%\end{proof}
\end{document}